\documentclass[a4paper,12pt]{article}
\usepackage{graphicx}
\usepackage{multirow}
\usepackage{bbm}
\usepackage[labelfont=bf,font=normalsize]{caption}
\usepackage{float}
\usepackage{footnote}
\usepackage{amssymb}
\usepackage{pifont}
\usepackage{ifpdf}
\ifpdf
\else
\usepackage{pstcol,pst-fill,pstricks}
\fi
\usepackage{natbib}
\usepackage{mathpazo}
\usepackage{import}
\usepackage{dsfont}
\usepackage{enumerate}
\usepackage{lscape}
\usepackage{epsfig}
\usepackage[latin1]{inputenc}
\usepackage[OT1]{fontenc}
\usepackage[T1]{fontenc}

\usepackage[margin=1in]{geometry}
\usepackage[doublespacing]{setspace}

\usepackage{color}
\usepackage{mdframed}
\usepackage{tikz}
\usepackage{tkz-graph}
\usepackage[bottom,flushmargin]{footmisc}
\usepackage{subcaption}  
\usepackage{times}
\usepackage{fancyhdr,graphicx,amsmath,amssymb}
\usepackage[ruled,vlined]{algorithm2e}
\usepackage{mathtools}
\usepackage{amsmath}
\usepackage{amsthm}
\usepackage{thmtools}
\usepackage{calrsfs}
\usepackage{booktabs}

\makeatletter
\newcommand*\bigcdot{\mathpalette\bigcdot@{.5}}
\newcommand*\bigcdot@[2]{\mathbin{\vcenter{\hbox{\scalebox{#2}{$\m@th#1\bullet$}}}}}
\makeatother
\usepackage{authblk}
\usepackage{setspace}
 \newcommand{\be}{\begin{equation}}
\newcommand{\ee}{\end{equation}}

\newcommand{\cbar}{\overline{c}}
\newcommand{\kbar}{\overline{k}}
\def\dotk{\dot{k}}
\def\dotc{\dot{c}}
\def\cbar{\bar{c}}
\def\kbar{\bar{k}}
\newcommand{\kmin}{k_{\min}}
\newcommand{\kmax}{k_{\max}}

\DeclareMathAlphabet{\pazocal}{OMS}{zplm}{m}{n}
\allowdisplaybreaks

\usepackage[hyperfootnotes=false]{hyperref}
 \hypersetup{colorlinks=true,
 linkcolor=blue,
 citecolor=blue,
 filecolor=black,
 urlcolor=black,
 pdfstartview={Fit},
pdfpagemode=UseNone
 }
\usepackage{cleveref}

\declaretheoremstyle[
  headfont=\normalfont\scshape,
  numbered=unless unique,
  bodyfont=\normalfont,
  spaceabove=1em,
  spacebelow=1em,
]{exmpstyle}

\newcommand{\p}{\partial}

\newcommand{\defeq}{\vcentcolon=}

\def\ee{\mathsf{e}}

\usepackage[mathlines]{lineno}

\newtheorem{Ass}{Assumption}
\newtheorem{prop}{Proposition}
\newtheorem{lemma}{Lemma}
\newtheorem{remark}{Remark}
\theoremstyle{definition}

\newtheorem{theorem}{Theorem}
\newtheorem{corollary}{Corollary}

\title{An Arbitrarily Precise Global Closed Form Approximation for the Neoclassical Growth Model\thanks{I thank two anonymous referees for their insightful suggestions.}}
\author[1]{Jordan Roulleau-Pasdeloup}
\affil[1]{Banque de France}
\date{\today}

\begin{document}
\begin{titlepage}
\maketitle
\thispagestyle{empty}

\abstract{I consider a neoclassical growth model with a constant absolute risk aversion (CARA) utility function and derive a global closed form approximation that is arbitrarily precise as the discount rate $\rho$ is close to the population growth rate $n$. I use it to show that the consumption function is strictly concave and that countries can have two different paths converging to the steady-state: front-loading and back-loading.}
\vspace{.5cm}\\
\noindent{\bfseries JEL codes:  O4, E21, E22} \\
\noindent{\bfseries Keywords: Consumption, Investment, Growth Model} \\
\end{titlepage}
\setstretch{1.5}
\setlength{\parskip}{1em}

\section{Introduction}

The \cite{Ramsey1928mathematical}- \cite{Cass1965optimum}-\cite{Koopmans1963concept} model is a precursor of both modern growth theory as well as business cycle theory. As a result, it has been the subject of considerable attention and is being taught in virtually every graduate program in Economics. Frustratingly, there is no known closed-form solution that applies for standard preferences and technology. 

Given that this is a model of growth, we want to know how an economy transitions from a low stock of initial capital to its steady state. In the absence of a closed-form solution, this question is usually answered with a phase diagram\textemdash see \cite{Barro2004economic} for a seminal textbook treatment. This however only offers indirect insights about how the model works and it prevents one from understanding how a specific parameter shapes the growth trajectory. As a result, most of the explicit results that we have regarding this class of models are due to a linear approximation around the steady state. By construction then, this approach isn't well suited to study how the economy behaves when starting with a low level of capital.

Against this backdrop, many economists have been searching for special cases where a closed-form solution can be obtained. \cite{LongPlosser1983} solve a stochastic Real Business Cycle model with full depreciation and logarithmic utility.\footnote{See \cite{StokeyLucas1989} or more recently \cite{Heer2024} for a textbook treatment of the deterministic version of this model.} Models in which the production function is linear in capital (so-called "AK" models) also admit a closed form solution, with the explicit derivation in \cite{Barro2004economic}. Models in which the saving rate is constant along the optimal path, as in \cite{Kurz1966general}, also admit a closed form solution.\footnote{\cite{Kurz1966general} studies under which parameter configuration of the \cite{Ramsey1928mathematical} model a constant saving rate is \emph{optimal}, and the content of the result is that the model then reduces to \cite{Solow1956contribution}, whose accumulation equation is a Bernoulli equation under Cobb-Douglas.}. In a growth model where saving is endogenous, \cite{Barro1995capital} show that if the elasticity of intertemporal substitution is tied to the rates of discounting and depreciation in a specific way, then the savings rate is constant. Likewise, if capital's share in GDP is equal to the inverse of the intertemporal elasticity of substitution as in \cite{Smith2006closed}, the ratio of consumption to capital is constant and the model boils down to a Bernoulli differential equation. The common theme in these contributions is that saving is a linear function of output, which makes the model tractable.\footnote{An exception is \cite{Mehlum2005closed}, who derives an explicit non-linear saddle path when the production function is Leontieff.}

This begs for the following question: are there more situations where one can find a closed form solution? Given the results in \cite{Gun2021ramsey} and \cite{Koster2025closed}, for constant relative risk aversion (CRRA) utility functions and Cobb-Douglas production functions the answer is most likely \textit{no}.\footnote{The two contributions differ: \cite{Gun2021ramsey} use Lie-group methods and recover, in their integrability condition, the \cite{Kurz1966general} solution, whereas \cite{Koster2025closed} show that the three known closed-form cases exhaust the configurations that are integrable in elementary functions.} In this context, I show in this paper that one can make some progress by considering a constant \textit{absolute} risk aversion (CARA) utility function as in the seminal contributions of \cite{Merton1971optimum}, \cite{Kimball1989precautionary} and \cite{Caballero1990consumption}. The assumption of CARA utility is no silver bullet and I show that for a generic parameterization the system of differential equations boils down to an Abel equation of the second kind for which no closed form solutions are known. However, if the discount rate $\rho$ is equal to the population growth rate $n$, this system admits a closed form solution that uses the Lambert W function as in \cite{Roulleau2026explicit}. This case is slightly problematic because the generalized rate of time preference $\rho-n$ is 0 and the usual transversality condition doesn't hold as a result. For this reason, I show explicitly that this case provides an arbitrarily precise solution for the case where $\rho\to n^+$.

It is worth being precise about the sense of "closed form" used here: what the current paper delivers under the assumption that $\rho=n$ is an explicit expression for the policy functions for savings $s(k)$ and consumption $c(k)$ in terms of the Lambert $W$ function\textemdash a function defined by the inverse of $h(w)=we^w$, in a way that is similar to the error function or the hypergeometric functions used by \cite{Boucekkine2008special} for the Lucas-Uzawa model. The Lambert W function is not Liouvillian in the sense that it cannot be written as a finite number of algebraic operations \textemdash see \cite{Bronstein2008algebraic} for a recent proof. That being said, there have been calls (see \cite{Gouvea2000time}) to recognize the Lambert W function as an elementary function.

With these explicit non-linear equations for optimal consumption/saving in hand, I focus on the dynamics of this economy starting with a low level of capital. In other words, the goal is to study how fast the initial take-off is on the way to the steady state. I show that the consumption function is strictly concave, so that consumption growth is fast at first and slows down as the economy converges to its steady state. In addition, I show that there are two possible paths in terms of the speed of convergence of the stock of capital to its steady state. With relatively low risk aversion, the households save a lot early on so that capital adjusts faster starting from a low level compared to when it is closed to its steady state\textemdash front-loading. With relatively high risk aversion, the reverse happens so that the speed of adjustment is slow initially and faster around the steady-state \textemdash back-loading.

\section{A Neoclassical Growth Model with CARA Utility}
\label{sec:model}

Time is continuous. Population evolves exogenously according to $L(t) = L_{0} e^{n t}$ with $n
\geq 0$. Production per worker is given by a Cobb-Douglas technology $f(k) = k^{\alpha}$
with $\alpha \in (0, 1)$, where $k$ denotes capital per worker. Whenever no ambiguity arises I drop the time subscript to economize on notation. Capital depreciates at
rate $\delta > 0$. The representative household has CARA preferences defined relative to a subsistence level $\bar{c} \geq 0$:
\begin{equation} \label{eq:utility}
u(c) = -\frac{1}{\gamma} e^{-\gamma(c - \bar{c})}, \qquad \gamma > 0,
\end{equation}
defined on the domain $c \geq \bar{c}$. The parameter $\bar{c}$ represents the
subsistence consumption level: consumption below $\bar{c}$ is infeasible, and utility
approaches its lower bound as $c \rightarrow \bar{c}^+$. The coefficient of absolute risk
aversion $-u''(c)/u'(c) = \gamma$ is constant and independent of $\bar{c}$. The main reason for including this is to prevent consumption from being negative, which is a possibility with a CARA utility function \textemdash see \cite{Blanchard1988consumption}. The
household discounts future utility at the rate of time preference $\rho > 0$.
The net rate of return on capital is given by
\begin{equation}
r(k) := f'(k) - \delta = \alpha k^{\alpha - 1} - \delta.
\label{eq:rk_def}
\end{equation}
Technology is assumed to be constant and normalized to one. Given that the utility function is not homothetic, this economy has no balanced growth path \footnote{See \cite{King1988production}. The classic statement of the stylized facts that these preferences can match is studied in \cite{Kaldor1961capital}, and \cite{Kongsamut2001beyond} and \cite{Ngai2007structural} study what happens when they are relaxed.}

\subsection{The Planner's Problem}

The social planner solves the following maximization program:
\begin{align}
\label{eq:disc_util}
\max_{c(t)}&\quad L_{0} \int_{0}^{\infty} e^{-(\rho - n) t} u(c(t)) dt\\
\label{eq:resource_constraint}
\dotk(t) &= k(t)^\alpha - (\delta + n) k(t) - c(t)\\
\label{eq:subsistence_constraint}
c(t) &\geq \bar{c}
\end{align}
where $L_{0}$ is the initial population and the initial stock of capital is given by $k(0)=k_0$. The effective rate of discount is $\rho - n$. I maintain throughout the parametric restriction $\rho > n$ which ensures that the standard transversality condition holds. When the subsistence constraint does not bind, the first-order conditions reduce to the standard Ramsey system with the CARA Euler equation,
\begin{align}
\dotk = k^{\alpha}-(\delta+n)k-c, \qquad \dotc = \frac{r(k)-\delta}{\gamma},
\label{eq:ramsey_explicit}
\end{align}
which is essentially the textbook treatment. The constrained regime described below is a departure from this benchmark. The current-value Hamiltonian for the planner's problem is given by:
\begin{equation} \label{eq:hamiltonian}
\mathcal{H}(k, c, \mu, \lambda) = u(c) + \mu \bigl[ k^\alpha - (\delta + n) k - c
\bigr] + \lambda (c - \bar{c}),
\end{equation}
where $\mu$ is the current-value costate variable (shadow price of capital) and
$\lambda \geq 0$ is the Kuhn-Tucker multiplier on the subsistence constraint $c \geq
\bar{c}$. The necessary conditions for an interior maximum of $\mathcal{H}$ in $c$, together with
the costate equation and complementary slackness are:
\begin{align}
& u'(c) - \mu + \lambda = 0,
\label{eq:foc} \\
& \lambda \geq 0, \quad c \geq \bar{c}, \quad
\lambda (c - \bar{c}) = 0, \label{eq:cs} \\
& \dot{\mu} = (\rho - n) \mu -
\frac{\partial \mathcal{H}}{\partial k} = \bigl[\rho - r(k) \bigr] \mu,
\label{eq:costate} \\
\label{eq:resource_constraint_2}
&\dotk = k^\alpha - (\delta + n) k - c\\
&\lim_{t \to \infty} e^{-(\rho - n) t} \mu(t) k(t)
= 0. \label{eq:tvc}
\end{align}
Before proceeding forward, it will be useful to describe the unconstrained steady state as it will imply a number of restrictions on the admissible range for initial capital as well as the value for subsistence consumption. From equation \eqref{eq:costate}, the steady state level of capital $k^\ast$ will be such that $r(k^\ast)=\rho$, which implies that steady state capital and consumption are given by:
\begin{align}
k^\ast(\rho)=\left(\frac{\alpha}{\rho+\delta}\right)^{\frac{1}{1-\alpha}},\quad c^\ast = f(k^\ast)-(\delta+n)k^\ast,   
\label{eq:steady_states}
\end{align}
Defining the net per-worker output $\psi(k):=f(k)-(\delta+n)k$, the following Lemma derives $k_{\min}$, the minimum admissible value for the initial stock of capital.

\begin{lemma}[Subsistence level for capital]\label{lem:kmin}
Assume that $\cbar<c^\ast$. Then \emph{(i)} $\psi$ is strictly concave on $(0,\infty)$, vanishes at $0$, attains its maximum at $k_{\psi}\defeq (\alpha/(\delta+n))^{1/(1-\alpha)}$, is strictly increasing on $(0,k_{\psi})$ and strictly decreasing on $(k_{\psi},\infty)$ with $\psi(k)\to -\infty$ as $k\to\infty$; \emph{(ii)} $k^\ast(\rho)< k_{\psi}$ for every $\rho> n$; \emph{(iii)} $\psi(k)=\cbar$ has exactly two roots $0<\kmin<\kmax$, and $\psi(k)-\cbar>0$ on $(\kmin,\kmax)$ and $\psi(k)-\cbar<0$ outside $[\kmin,\kmax]$ ; \emph{(iv)} $\kmin<k^\ast(\rho)<\kmax$
\end{lemma}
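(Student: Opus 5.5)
The plan is to prove the four parts in order, using only elementary calculus on $\psi(k)=k^\alpha-(\delta+n)k$ and the explicit formula \eqref{eq:steady_states}.

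For \emph{(i)}, I differentiate twice: $\psi'(k)=\alpha k^{\alpha-1}-(\delta+n)$ and $\psi''(k)=\alpha(\alpha-1)k^{\alpha-2}<0$ on $(0,\infty)$, which gives strict concavity. Continuity at $0$ gives $\psi(0)=0$. The derivative $\psi'$ is strictly decreasing, tends to $+\infty$ as $k\to0^+$, and vanishes exactly at $k_\psi=(\alpha/(\delta+n))^{1/(1-\alpha)}$. Hence $\psi$ is strictly increasing on $(0,k_\psi)$, strictly decreasing on $(k_\psi,\infty)$, and $k_\psi$ is the unique maximizer. Writing $\psi(k)=k\,(k^{\alpha-1}-(\delta+n))$ with $k^{\alpha-1}\to0$ shows $\psi(k)\to-\infty$ as $k\to\infty$.

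For \emph{(ii)}, the map $x\mapsto(\alpha/x)^{1/(1-\alpha)}$ is strictly decreasing because $1/(1-\alpha)>0$. Since $\rho+\delta>n+\delta$ when $\rho>n$, this gives $k^\ast(\rho)<k_\psi$.

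For \emph{(iii)}, I first need $\cbar<\psi(k_\psi)$. This holds because $c^\ast=\psi(k^\ast)\le\psi(k_\psi)$ and $\cbar<c^\ast$ by assumption. On $[0,k_\psi]$, $\psi$ is continuous and strictly increasing from $\psi(0)=0\le\cbar$ to $\psi(k_\psi)>\cbar$. The intermediate value theorem and monotonicity then give a unique root $\kmin$ in this interval.

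To get $\kmin>0$ when $\cbar=0$, I exclude the trivial root $0$, since the statement places the roots in $(0,\infty)$. Because $\psi>0$ near $0^+$, strict monotonicity still yields exactly one root in $(0,k_\psi)$ with the point $0$ excluded. When $\cbar>0$, positivity of $\kmin$ is automatic.

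On $[k_\psi,\infty)$, $\psi$ decreases strictly from a value above $\cbar$ towards $-\infty$, so there is a unique root $\kmax>k_\psi$. Monotonicity on each branch then fixes the sign pattern: $\psi-\cbar>0$ on $(\kmin,\kmax)$, and $\psi-\cbar<0$ on $(0,\kmin)\cup(\kmax,\infty)$.

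For \emph{(iv)}, $\psi(k^\ast)=c^\ast>\cbar$, so by \emph{(iii)} $k^\ast\in(\kmin,\kmax)$. Alternatively, \emph{(ii)} together with monotonicity on $(0,k_\psi)$ gives $\kmin<k^\ast<k_\psi<\kmax$ directly.

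The only delicate point I expect is the $\cbar=0$ case in \emph{(iii)}. There the root $k=0$ exists but lies outside the open domain, and the phrase "exactly two roots" must be read on $(0,\infty)$. I will state that convention explicitly.
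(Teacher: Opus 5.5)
Your proofs of (i), (ii), (iv), and of (iii) when $\cbar>0$, follow the paper's argument. You use the derivative signs of $\psi$, monotonicity of $x\mapsto(\alpha/x)^{1/(1-\alpha)}$, the bound $\cbar<c^\ast=\psi(k^\ast)\le\psi(k_\psi)$, the intermediate value theorem on each monotone branch, and then read (iv) off the sign pattern. When $\cbar>0$, the root on the increasing branch is strictly positive because $\psi(0)=0<\cbar$, so that case is fine.

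\textbf{Gap.} Your paragraph on $\cbar=0$ is wrong, and it cannot be repaired by a reading convention. $\psi$ is continuous on $[0,k_\psi]$ with $\psi'>0$ on $(0,k_\psi)$, so it is strictly increasing there. Hence $\psi(k)>\psi(0)=0=\cbar$ for every $k\in(0,k_\psi]$.

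The fact you cite, that $\psi>0$ near $0^+$, is what rules out a root on $(0,k_\psi)$; it does not produce one. On $(0,\infty)$ the equation $\psi(k)=0$ has the single root $k=(\delta+n)^{-1/(1-\alpha)}>k_\psi$. So under your convention of counting roots in $(0,\infty)$, the claim of exactly two roots is false. If $0$ is counted, the smaller root is $0$ itself, not a positive $\kmin$.

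Part (iii) as stated therefore genuinely needs $\cbar>0$. For $\cbar=0$ you must set $\kmin=0$. With $\kmin=0$, the sign pattern ($\psi-\cbar>0$ on $(0,\kmax)$, $<0$ on $(\kmax,\infty)$) and part (iv) still hold.

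The paper's proof has the same blind spot: it asserts $\kmin\in(0,k_\psi)$ from the intermediate value theorem, which only delivers this when $\cbar>0$. The correct fix is to strengthen the hypothesis to $\cbar>0$, or weaken the conclusion to $0\le\kmin$ with equality iff $\cbar=0$. It should not be asserted as a root-counting convention.
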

\begin{proof}
See Appendix \ref{sec:app_proof_Lemma_kmin}.
\end{proof}
The main purpose of Lemma \ref{lem:kmin} is to prove the existence of $k_{min}$. If initial capital is such that $k_0<\kmin$ then the constraint set of \eqref{eq:disc_util}- \eqref{eq:subsistence_constraint} is \emph{empty}: feasibility requires $c\ge\cbar$ and $k\ge0$, but under the most parsimonious policy $c=\cbar$ implies $\dotk=\psi(k)-\cbar<0$ on $[0,\kmin)$ by Lemma \ref{lem:kmin}\emph{(iii)}. This implies that capital falls, $\psi(k)-\cbar$ grows more negative, and $k$ reaches zero in finite time; any other admissible policy runs capital down at least as fast. Hence no feasible path exists, and the restriction $k_0>\kmin$ is a well-posedness condition. Accordingly, I maintain the following assumption throughout the paper.

\begin{Ass}
\label{ass:k_min_c_bar}
The initial value for the stock of capital is such that $k_{\min}<k_0<k^\ast(\rho)$ and the level of subsistence consumption is such that $\cbar<c^\ast$.
\end{Ass}

By enforcing the upper bound on $k_0$, I focus on an economy that converges to its steady state from below, which is the object of interest for the take-off. I proceed sequentially and check that the equilibrium conditions \eqref{eq:foc}-\eqref{eq:tvc} hold above and at $\bar{c}$. I begin with the interior regime where $c>\bar{c}$. By equation  \eqref{eq:cs}, $\lambda = 0$. Equation
\eqref{eq:foc} then gives $\mu = u'(c) = e^{-\gamma(c - \bar{c})}$. Differentiating it with respect to time, equating it with \eqref{eq:costate} and dividing by $-\gamma \mu \neq 0$, one obtains the second equation in \eqref{eq:ramsey_explicit}. The CARA specification has the distinctive property that the right hand side of the Euler equation does not involve
$c$ so that the derivative of consumption with respect to time only depends on the state $k$. This decoupling is the
structural feature that will enable the closed-form expression later. 

Assume now that consumption is constrained so that $c=\cbar$. Then from equation \eqref{eq:foc}, it follows that $\lambda = \mu -
u'(\bar{c}) = \mu - 1$ given that $e^{-\gamma\cdot 0}=1$. The complementary-slackness condition \eqref{eq:cs} requires
$\lambda \geq 0$ and thus $\mu\geq 1$. Substituting $c = \bar{c}$ into the resource constraint \eqref{eq:resource_constraint} gives $
\dotk = k^\alpha - (\delta + n) k - \bar{c}$, which is strictly positive for $k\in(k_{\min},k^*(\rho))$ \textemdash see Lemma \ref{lem:kmin}\emph{(iii)}-\emph{(iv)}. If initial capital $k_0$ is close to $k_{\min}$, capital therefore accumulates monotonically without any
consumption above subsistence until the economy reaches a capital level $\overline{k}$ at which the
interior Euler equation \eqref{eq:ramsey_explicit} becomes consistent with $c = \bar{c}$ \textemdash such a level exists for the auxiliary problem where $\rho=n$ by Lemma \ref{lem:threshold}, and for 
$\rho$ near $n$ by the implicit function theorem argument in the proof of Theorem \ref{thm:smooth_rho}.

Observe now that the costate equation \eqref{eq:costate}, $\dot{\mu} = \bigl[\rho - r(k)\bigr]
\mu$, holds in both regimes: the Kuhn-Tucker multiplier $\lambda$ does not appear in
$\partial \mathcal{H}/\partial k = \mu[\alpha k^{\alpha-1} - (\delta + n)]$, because the subsistence
constraint is on the control $c$, not on the state $k$. Since $k$ is continuous in $t$ and
$r(k)$ is smooth, the right-hand side $\bigl[\rho - r(k) \bigr] \mu$ is continuous in
$t$, and therefore $\mu$ is continuously differentiable \textemdash in particular, continuous at
the junction time when $k$ crosses $\bar{k}(\rho)$.

At the junction point $\bar{k}(\rho)$ where the economy transitions from the constrained to
the interior regime, the interior first-order condition \eqref{eq:foc} with $\lambda =
0$ gives $\mu(\bar{k}(\rho)) = u'(\bar{c}) = 1$. By continuity of $\mu$, the same value
obtains from the constrained side: $\lambda(\bar{k}(\rho)) = \mu(\bar{k}(\rho)) - 1 = 0$. In the
constrained region where $k < \bar{k}(\rho) < k^{\ast}$,  $r$ is strictly decreasing in $k$
and satisfies $r(k) > r(k^{\ast}) = \rho$, so the costate equation gives $\dot{\mu} = \bigl[\rho - r(k) \bigr] \mu < 0$: the costate is strictly decreasing in forward
time for every $\rho > n$. Since $\mu = 1$ at the exit time (when $k = \bar{k}(\rho)$)
and $\mu$ is decreasing forward, $\mu(t) > 1$ for all earlier times $t$ in the
constrained regime. Therefore $\lambda(t) = \mu(t) - 1 > 0$ throughout, confirming
that the Kuhn-Tucker condition \eqref{eq:cs} is satisfied with strict inequality. In order to ensure that the transversality condition \eqref{eq:tvc} holds for $\rho>n$, I pick the solution which implies that savings converge to $s(k^\ast)=0$ at steady state. In practice, this means that the differential equation for the consumption/saving function will feature a terminal condition. 

\subsection{The optimal savings/consumption function}

Whenever $k_{min}<k<\overline{k}$, the optimal consumption function is $c(k)=\cbar$. Accordingly, I focus on the more interesting case where the economy has left the subsistence regime so that $k>\kbar(\rho)$. Along the convergent saddle path\textemdash the trajectory selected by the terminal condition $s(k^\ast(\rho))=0$, which approaches the steady state $(k^\ast(\rho),0)$ from below\textemdash the economy can be represented for $k\in(\kbar(\rho),k^\ast(\rho))$ by the following vector field:
\begin{align}
\dot{x} = \begin{pmatrix}
\dotk\\
\dot{s}
\end{pmatrix}
=
\begin{pmatrix}
s \\
\left[r(k)-n\right]s - \frac{r(k)-\rho}{\gamma}
\end{pmatrix}
\defeq F(x;\rho)
\label{eq:vector_field}
\end{align}
where the second equation has been obtained by taking the time derivative of $s=\dotk$. In order to solve for the optimal saving function $s(k)$, the following result will be useful:
\begin{lemma}
\label{lem:kdot_pos}
Let $k\in (\kbar(\rho), k^{\ast}(\rho))$. Then it follows that $\dotk>0$.
\end{lemma}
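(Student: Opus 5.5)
We must show $\dotk>0$ along the convergent saddle path for $k\in(\kbar(\rho),k^\ast(\rho))$. The argument is by contradiction, using the phase-plane structure of \eqref{eq:vector_field}. The key observation is that the $\dot s=0$ locus, evaluated on the $\dotk=0$ axis, lies below the steady state. When $s=0$, the second component of $F$ reduces to $\dot s=-(r(k)-\rho)/\gamma$. For $k<k^\ast(\rho)$ we have $r(k)>\rho$, because $r$ is strictly decreasing and $r(k^\ast)=\rho$. Hence $\dot s<0$ at every point $(k,0)$ with $k<k^\ast(\rho)$. Consequently a trajectory can cross the axis $s=0$ to the left of $k^\ast$ only from above to below, that is, from $s>0$ into $s<0$, never the reverse.

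The steps are as follows.

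\textbf{Step 1 (entry).} At the junction $\kbar(\rho)$, consumption equals $\cbar$. So $\dotk=\psi(\kbar)-\cbar$, which is strictly positive by Lemma \ref{lem:kmin}\emph{(iii)}--\emph{(iv)}, since $\kmin<\kbar<k^\ast<\kmax$. The path therefore enters the interior regime with $s>0$.

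\textbf{Step 2 (no crossing).} Suppose that at some later time the path reaches a point $(k_1,s)$ with $k_1<k^\ast$ and $s\le 0$. By continuity, and by Step 1, there is a first time at which $s=0$ with $k=k_1\in(\kbar,k^\ast)$. At that time $\dot s<0$ by the observation above. So $s$ becomes strictly negative immediately after, and $k$ starts to decrease.

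\textbf{Step 3 (trapping).} On the region $\{k<k^\ast, s<0\}$, the only way out is through the axis $s=0$, and such a crossing would require $\dot s>0$ there, which is impossible. The region $k\ge k^\ast$ cannot be reached either, because $k$ is decreasing. So the path stays in $\{k\le k_1,\ s<0\}$ forever. Along it $k$ is monotonically decreasing and never reaches $k^\ast(\rho)$. This contradicts the defining property of the saddle path, which is the terminal condition $s(k^\ast(\rho))=0$ together with convergence to $(k^\ast,0)$. Hence $s=\dotk>0$ on the whole interval $(\kbar(\rho),k^\ast(\rho))$.

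I would add one remark for completeness. Since $\dotk>0$ throughout, $k$ is strictly increasing in time on this segment. This legitimizes the change of variables from $t$ to $k$ used afterwards for $s(k)$.

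The main obstacle is Step 3. Staying in the lower region does not mean the trajectory must converge to some interior point, so I must rule out stalling some other way. Decreasing $k$ while $s<0$ is enough to exclude convergence to $k^\ast$, which is all that is needed. If one also wants to rule out a degenerate approach to the boundary $s=0$ as $t\to\infty$, any limit point would have to be a rest point of $F$. The only rest point of $F$ is $(k^\ast,0)$, which is excluded because $k$ stays at most $k_1<k^\ast$.
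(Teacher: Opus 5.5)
Your proposal is correct and follows essentially the same route as the paper: both rest on $\dot s=-(r(k)-\rho)/\gamma<0$ along $\{s=0,\ k<k^\ast(\rho)\}$, so a zero of $s$ to the left of $k^\ast(\rho)$ pushes the orbit into $\{s<0\}$, where $k$ falls and convergence to $(k^\ast(\rho),0)$ is impossible. Your explicit trapping argument in Step 3 makes rigorous a point the paper leaves implicit, namely that $s$ cannot return to zero once negative, and it already excludes every point with $k<k^\ast(\rho)$ and $s\le 0$, so Step 1 is not needed; dropping it is preferable, since Step 1 relies on $\kbar(\rho)>\kmin$, and Lemma \ref{lem:threshold} derives that inequality from the very positivity of $s$ being proved here.
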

\begin{proof}
Consider the optimal trajectory $(k(t),s(t))$ solving \eqref{eq:vector_field}, which converges to the steady state $(k^\ast(\rho),0)$. Since $\dotk=s$, it is enough to show that $s(t)>0$ for every $t$ on this arc. On the segment $\{s=0,\ k<k^\ast(\rho)\}$ the vector field \eqref{eq:vector_field} satisfies
\[
\dotk\big|_{s=0}=0,
\qquad
\dot{s}\big|_{s=0}=-\frac{r(k)-\rho}{\gamma}<0,
\]
where the inequality holds because $r$ is strictly decreasing with $r(k)>r(k^\ast(\rho))=\rho$ for $k<k^\ast(\rho)$. Thus at every point of this segment the flow points strictly downward, into $\{s<0\}$.
 
Suppose, for contradiction, that $s(t_0)=0$ at some finite time $t_0$ on the arc, with $k(t_0)<k^\ast(\rho)$. Since the orbit reaches $s=0$ only as $t\to\infty$, such a $t_0$ would be an interior zero, so $s$ would change sign there; but $\dot{s}(t_0)<0$ forces $s$ to be strictly decreasing through the zero, so $s>0$ immediately before $t_0$ and $s<0$ immediately after. Once $s<0$, we have $\dotk=s<0$, so $k$ is strictly decreasing thereafter and moves away from $k^\ast(\rho)$; the orbit can then no longer return to $(k^\ast(\rho),0)$, contradicting convergence. Hence $s(t)>0$ for all $t$ on the arc, and $\dotk=s>0$ on $(\kbar(\rho),k^\ast(\rho))$.
\end{proof}
This result implies that one can write both optimal saving and consumption as a function of $k$ as follows:
\begin{align}
s(k) \defeq f(k) - (\delta+n)k - c(k)
\label{eq:saving-def}
\end{align}
Indeed, given that $\dotk=s(k)>0$ on $(\kbar(\rho),k^\ast(\rho))$ by Lemma \ref{lem:kdot_pos}, the map $t\mapsto k(t)$ is a strictly increasing $C^1$ bijection so consumption may be written as a function of $k$ alone with $c'(k)=\dotc/\dotk$\textemdash the Time-Elimination Method of \cite{Mulligan1991note}. Equation \eqref{eq:saving-def} is only implicit at the moment given that it depends on the unknown function $c(k)$. In this context, one can show that the equilibrium saving function is the solution of a particular ODE. This is established in the following Lemma:
\begin{lemma}[Saddle path ODE] \label{lem:ode}
On any open subinterval of $(\kbar(\rho), k^{\ast}(\rho))$, the saving function $s(k)$ satisfies the following first-order ordinary differential equation:
\begin{equation} \label{eq:saddle-ode}
s'(k)s(k)= (r(k) - n) s(k) - \frac{r(k) - \rho}{\gamma},
\end{equation}
together with the terminal condition $\lim_{k \to k^{\ast}} s(k) = 0$.
\end{lemma}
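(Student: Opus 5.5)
The plan is to apply the Time-Elimination Method to the planar system \eqref{eq:vector_field}. Lemma \ref{lem:kdot_pos} makes $k$ a valid independent variable, so the second component of $F$ turns into an ODE for $s$ as a function of $k$. The terminal condition will then come from convergence of the saddle path to $(k^\ast(\rho),0)$.

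\textbf{Step 1: $s$ as a function of $k$.} Fix an open subinterval $I\subset(\kbar(\rho),k^\ast(\rho))$ and let $(k(t),s(t))$ be the optimal interior trajectory. By Lemma \ref{lem:kdot_pos}, $\dotk=s(t)>0$ whenever $k(t)\in I$. Hence $t\mapsto k(t)$ is a strictly increasing $C^1$ bijection onto $I$ from a time interval $J$, with a $C^1$ inverse $t(k)$ satisfying $t'(k)=1/s>0$.

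Define $s(k)\defeq s(t(k))$. This agrees with \eqref{eq:saving-def}, since along the path $s=\dotk=f(k)-(\delta+n)k-c$. Differentiability of $t\mapsto s(t)$ follows from the fact that $\dot s$ is given by the smooth vector field $F$. Therefore $s(k)$ is $C^1$ on $I$, and by the chain rule
\[
\dot s = s'(k)\,\dotk = s'(k)\,s(k).
\]

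\textbf{Step 2: the ODE.} Substitute the second component of \eqref{eq:vector_field} for $\dot s$ and evaluate at $t=t(k)$. This gives $s'(k)s(k)=[r(k)-n]s(k)-\frac{r(k)-\rho}{\gamma}$, which is \eqref{eq:saddle-ode}.

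It is worth verifying the second component of $F$ once, as a consistency check. Differentiating $s=\psi(k)-c$ gives $\dot s=\psi'(k)s-\dotc$. Here $\psi'(k)=f'(k)-\delta-n=r(k)-n$. The CARA Euler equation, derived from \eqref{eq:foc} and \eqref{eq:costate}, gives $\dotc=(r(k)-\rho)/\gamma$. Together these reproduce the stated expression.

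\textbf{Step 3: terminal condition.} The optimal path converges to $(k^\ast(\rho),0)$ as $t\to\infty$; this is the saddle-path selection imposed for \eqref{eq:tvc}. Since $k(t)$ increases strictly to $k^\ast$, the limit $k\to k^{\ast-}$ corresponds to $t\to\infty$. Hence $\lim_{k\to k^\ast}s(k)=\lim_{t\to\infty}s(t)=0$.

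\textbf{Main obstacle.} The delicate point is regularity rather than algebra. One must make sure that $k$ strictly increases on the whole subinterval, which is exactly what Lemma \ref{lem:kdot_pos} supplies, and that $k(t)$ approaches $k^\ast$ monotonically rather than in finite time. Finite-time arrival is excluded because $(k^\ast,0)$ is an equilibrium of $F$, so by uniqueness of solutions it cannot be reached in finite time. This justifies identifying the limit in $k$ with the limit in $t$.
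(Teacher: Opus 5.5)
Your proof is correct and follows essentially the paper's route: Lemma \ref{lem:kdot_pos} plus the chain rule give $\dot s = s'(k)\dotk = s'(k)s(k)$. You then substitute $\dot s=\psi'(k)s-\dotc$ with the CARA Euler equation $\dotc=(r(k)-\rho)/\gamma$; your consistency check is exactly the paper's computation, and the terminal condition comes from convergence of the saddle path to $(k^\ast(\rho),0)$. Your added remarks on the $C^1$ inverse $t(k)$ and on why $k^\ast(\rho)$ cannot be reached in finite time only make explicit what the paper leaves implicit.
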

 
\begin{proof}
Using Lemma \ref{lem:kdot_pos} as well as the chain rule gives $\dot{s} = s'(k)\dotk=s'(k)s(k)$. Using this and differentiating \eqref{eq:saving-def} along the optimal trajectory using the chain rule again, I obtain:
\begin{align}
\nonumber
s'(k)s(k)  &= \bigl[f'(k) - (\delta + n)\bigr] \dotk - c'(k)\dotk\\
&= \bigl[f'(k) - (\delta + n)\bigr] \dotk - \dotc\notag\\
&= (r(k) - n) s(k) - \frac{r(k) - \rho}{\gamma},
\label{eq:ODE_sk}
\end{align}
where I have used the definition of the real rate of return \eqref{eq:rk_def} together with the Euler equation \eqref{eq:ramsey_explicit} to substitute for $\dotc$. The terminal condition follows from the fact that
$\lim_{t \to \infty} s(k(t)) = 0$ on the convergent saddle path to ensure that the transversality condition \eqref{eq:tvc} holds.
\end{proof}

Equation \eqref{eq:ODE_sk} is an Abel ODE of the second kind. Unfortunately, there is no known closed form solution for such an equation.\footnote{Some Abel ODEs of the second kind are amenable to a closed-form solution \textemdash see \cite{Mhadhbi2024exact}. Unfortunately, these require a structure that doesn't match the specific example considered in this paper.} In the special case where $\rho=n$, equation \eqref{eq:ODE_sk} becomes separable and thus amenable to a closed form solution. The problem is that under this parameter restriction, the maximization objective \eqref{eq:disc_util} diverges to $+\infty$ and the transversality condition \eqref{eq:tvc} doesn't hold anymore. With this in mind, I will take the following approach: solve the ODE \eqref{eq:ODE_sk} explicitly for the case $\rho=n$ and then show that the solution of the ODE is smooth in $\rho$ which allows one to consider the case $\rho\to n^+$. In this context, I define $k^*(n)$ as the first part of equation \eqref{eq:steady_states} evaluated at $\rho=n$. 

\subsection{An explicit expression for the savings function}

Working directly with $s\defeq\dotk$ and using $\ddot k = s\,\mathrm{d}s/\mathrm{d}k$, taking the time derivative of the accumulation equation and substituting the Euler equation \eqref{eq:ramsey_explicit} with $\rho=n$ gives:
\begin{align}
s\,\frac{\mathrm{d}s}{\mathrm{d}k} = \bigl(\alpha k^{\alpha-1}-\delta-n\bigr)\Bigl(s-\tfrac{1}{\gamma}\Bigr)
\iff \Bigl(1+\frac{1/\gamma}{\,s-1/\gamma\,}\Bigr)\mathrm{d}s = \bigl(\alpha k^{\alpha-1}-\delta-n\bigr)\mathrm{d}k.
\label{eq:ODE_rho_n}
\end{align}
It can be shown (see Lemma \ref{lem:no-obstruction}\emph{(iii)} in Appendix \ref{sec:app_proof_smooth_rho}) that $s(k)\in(0,1/\gamma)$ on $(k_{\min},k^\ast(n))$. Using this to integrate \eqref{eq:ODE_rho_n}, one gets:
\begin{align*}
s(k) + \frac{1}{\gamma}\ln\left(\tfrac{1}{\gamma}-s(k)\right) = \psi(k) + A,
\end{align*}
where, as before, $\psi(k)= f(k)-(\delta+n)k$ and $A$ is a constant of integration. Imposing $s(k^\ast(n))=0$ pins down $A=\ln(1/\gamma)/\gamma-\psi(k^\ast(n))$, which now gives:
\begin{align}
\gamma s(k)+\ln(1-\gamma s(k)) = \psi(k) - \psi(k^\ast(n)).    
\label{eq:gamma_sk_Lambert}
\end{align}
Defining $u(k)\defeq \gamma s(k)-1$ gives an expression that, after exponentiating both sides, can be inverted with the first branch\footnote{The relevant branch is $W_0$ and not $W_{-1}$ because $s(k)\in(0,1/\gamma)$ forces $\gamma s(k)-1 \in (-1,0)$ which is the range of $W_0$; see \cite{Mezo2022lambert} for a textbook treatment and \cite{Roulleau2026explicit} for the same construction.} of the Lambert $W$ function denoted by $W_0$. This yields the following proposition:

\begin{prop}
\label{prop:closed_form}
Assume that $k\in(\bar{k}(n),k^\ast(n))$. It follows that the solution of the ODE \eqref{eq:ODE_rho_n} is:
\begin{align}
s(k;n) = \frac{1+w_0(k)}{\gamma},\quad w_0(k)\defeq W_0\left(\nu\cdot \exp\left\{\gamma(\psi(k))\right\}\right)   ,
\end{align}
where I have defined $\nu\defeq -\exp\left\{-1-\gamma(\psi(k^\ast(n)))\right\}$. The associated consumption function is $c(k;n) = f(k) - (\delta+n)k-s(k;n)$.
\end{prop}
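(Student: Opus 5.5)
The plan is to start from the integrated implicit relation \eqref{eq:gamma_sk_Lambert} and invert it with $W_0$. Three things must be checked: the integration step is legitimate; the inversion lands on the principal branch; and the resulting function really solves \eqref{eq:ODE_rho_n} with the terminal condition $s(k^\ast(n);n)=0$.

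\textbf{Step 1 (integration).} Take as given, from Lemma \ref{lem:no-obstruction}\emph{(iii)}, that $s(k)\in(0,1/\gamma)$ on the relevant interval. Then $s-1/\gamma\neq 0$, so dividing \eqref{eq:ODE_rho_n} by $s-1/\gamma$ is legitimate. Both sides of the separated form are continuous in $k$ on $(\kbar(n),k^\ast(n))$, and integrating them gives $s+\frac{1}{\gamma}\ln(\frac{1}{\gamma}-s)=\psi(k)+A$. Letting $k\to k^\ast(n)$ and using $s\to 0$ fixes $A$. Multiplying by $\gamma$ then yields \eqref{eq:gamma_sk_Lambert}.

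\textbf{Step 2 (inversion).} Set $u=\gamma s-1\in(-1,0)$, so that $1-\gamma s=-u$. Relation \eqref{eq:gamma_sk_Lambert} becomes
\[
1+u+\ln(-u)=\gamma\psi(k)-\gamma\psi(k^\ast(n)).
\]
Exponentiating gives $-u\,e^{u}=\exp\{-1+\gamma\psi(k)-\gamma\psi(k^\ast(n))\}$, that is,
\[
u\,e^{u}=\nu\,e^{\gamma\psi(k)},\qquad \nu=-e^{-1-\gamma\psi(k^\ast(n))}.
\]
Next check the argument $z(k)=\nu e^{\gamma\psi(k)}=-e^{-1-\gamma(\psi(k^\ast(n))-\psi(k))}$. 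By Lemma \ref{lem:kmin}(i),(ii), $\psi$ is strictly increasing on $(0,k_\psi)$ and $k^\ast(n)\le k_\psi$, where equality holds exactly at $\rho=n$. Hence $\psi(k)<\psi(k^\ast(n))$ for $k<k^\ast(n)$, and so $z(k)\in(-1/e,0)$. On this interval $we^w$ has two real preimages, one in $(-1,0)$ and one in $(-\infty,-1)$. Since $u\in(-1,0)$, the correct preimage is $u=W_0(z(k))$, which gives $s=(1+w_0(k))/\gamma$.

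\textbf{Step 3 (verification).} First, $w_0(k)\to W_0(-1/e)=-1$ as $k\to k^\ast(n)$, so $s\to 0$. Second, $W_0$ is smooth on $(-1/e,0)$ with $W_0'(z)=W_0/(z(1+W_0))$. Direct differentiation of $(1+w_0)/\gamma$ therefore reproduces $s s'=\psi'(k)(s-1/\gamma)$, and $\psi'(k)=\alpha k^{\alpha-1}-\delta-n$, which is exactly \eqref{eq:ODE_rho_n}. The consumption formula follows immediately from the definition \eqref{eq:saving-def}.

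\textbf{Main obstacle.} The delicate point is that the terminal condition sits at the branch point $z=-1/e$. There the separated equation is singular, since $s'$ is of the form $0/0$, so uniqueness is not automatic. I would handle this through the branch selection. The $W_{-1}$ branch gives a second solution through the same endpoint, with $s>1/\gamma$, and it is excluded by the a priori bound $s<1/\gamma$. Any solution with $s\in(0,1/\gamma)$ must satisfy the integrated identity, hence equals $W_0$, and this yields uniqueness within the admissible class.
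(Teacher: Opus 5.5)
Your proposal is correct and takes essentially the paper's own route: you integrate the separable form of \eqref{eq:ODE_rho_n} using the a priori bound $s\in(0,1/\gamma)$ from Lemma \ref{lem:no-obstruction}\emph{(iii)}, fix the constant with $s(k^\ast(n))=0$, substitute $u=\gamma s-1$, exponentiate, and select $W_0$ because $u\in(-1,0)$; your check that the argument lies in $(-1/e,0)$ (since $k^\ast(n)=k_\psi$) and your direct verification are useful additions that the paper leaves implicit. One slip in your final paragraph: the $W_{-1}$ branch gives $u\le -1$, i.e.\ $s\le 0$ (it traces the unstable arm through $(k^\ast(n),0)$), not $s>1/\gamma$, so it is excluded by the lower bound $s>0$ rather than by the upper bound.
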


With these explicit expressions in hand, one can now study the properties of the savings/consumption function. These are described in the following corollary:
\begin{corollary} 
\label{cor:properties}
Assume $\rho = n$. The consumption function on the interior interval $(\kbar(n), k^{\ast}(n))$ satisfies the following properties:
\begin{align*}
(i)\quad  s'(k)<0, \quad (ii)\quad c'(k) > 0, \quad (iii)\quad c''(k) < 0  
\end{align*}
\end{corollary}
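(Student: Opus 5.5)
The plan is to work directly with the saddle-path ODE \eqref{eq:ODE_rho_n} rather than differentiating the Lambert-$W$ expression, using two facts: $s(k)\in(0,1/\gamma)$ (Lemma \ref{lem:no-obstruction}\emph{(iii)}) and the fact that at $\rho=n$ we have $r(k)-n=\psi'(k)$ and $k^\ast(n)=k_\psi$. By Lemma \ref{lem:kmin}\emph{(i)} this gives $\psi'(k)>0$ on the interval $(\kbar(n),k^\ast(n))$.

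\emph{Step (i).} Solving \eqref{eq:ODE_rho_n} for the derivative gives
\begin{equation*}
s'(k)=\frac{\psi'(k)\bigl(s(k)-1/\gamma\bigr)}{s(k)} .
\end{equation*}
Since $\psi'>0$, $s>0$ and $s<1/\gamma$, this is strictly negative.

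\emph{Step (ii).} Since $c=\psi-s$, we get
\begin{equation*}
c'(k)=\psi'(k)\Bigl[1-\frac{s-1/\gamma}{s}\Bigr]=\frac{\psi'(k)}{\gamma s(k)}>0 .
\end{equation*}
This is just the time-elimination identity $c'=\dotc/\dotk$ combined with the Euler equation at $\rho=n$.

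\emph{Step (iii).} This is the main obstacle, because the sign of $c''$ is not immediate. Differentiating the formula for $c'$ gives
\begin{equation*}
\gamma s^2 c''=\psi'' s-\psi' s'=\frac{\psi'' s^2+\psi'^2\,(1/\gamma-s)}{s}.
\end{equation*}
The two terms in the numerator have opposite signs. A Taylor expansion near $k^\ast(n)$ shows they cancel to leading order, so no local argument suffices and I need a global inequality. Set $y\defeq\gamma s\in(0,1)$ and $\Phi(k)\defeq\gamma\bigl(\psi(k^\ast(n))-\psi(k)\bigr)>0$. The integrated relation \eqref{eq:gamma_sk_Lambert}, in the form consistent with Proposition \ref{prop:closed_form} (i.e.\ with $\gamma\psi$), reads $g(y)\defeq -y-\ln(1-y)=\Phi$. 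Multiplying out, the required inequality $c''<0$ becomes
\begin{equation*}
\Phi'^2\,\frac{1-y}{y^2}<\Phi''.
\end{equation*}
I then split this into two inequalities.
\begin{enumerate}
\item[(a)] A comparison of power series: $\frac{y^2}{1-y}=\sum_{j\ge2}y^j$ while $2g(y)=\sum_{j\ge 2}\frac{2}{j}y^j$. Hence $\frac{y^2}{1-y}>2g(y)=2\Phi$ for $y\in(0,1)$, since the coefficients agree for $j=2$ and are strictly larger for $j\ge3$.
\item[(b)] It then suffices to show $\Phi'^2\le 2\Phi\Phi''$. In terms of $\psi$ this reads
\begin{equation*}
\psi'(k)^2\le 2\bigl(-\psi''(k)\bigr)\int_k^{k^\ast(n)}\psi'(x)\,\mathrm{d}x .
\end{equation*}
\end{enumerate}

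To prove (b), I use that $\psi'(x)=\alpha x^{\alpha-1}-(\delta+n)$ is strictly convex, because $\psi'''>0$, and vanishes at $k^\ast(n)$. Its tangent line at $k$, namely $\psi'(k)+\psi''(k)(x-k)$, therefore lies below $\psi'$. This tangent line hits zero at $x_0=k+\psi'(k)/(-\psi''(k))$, and $x_0<k^\ast(n)$ because the tangent lies below $\psi'$ and $\psi'>0$ on $(k,k^\ast(n))$. Integrating the positive part of the tangent over $[k,x_0]$ gives
\begin{equation*}
\int_k^{k^\ast(n)}\psi'\,\mathrm{d}x\ \ge\ \frac{\psi'(k)^2}{2(-\psi''(k))}.
\end{equation*}
Combining (a) and (b) yields the strict inequality, and hence $c''<0$.

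Before relying on this, I would double-check the $\gamma$ scaling in \eqref{eq:gamma_sk_Lambert} against Proposition \ref{prop:closed_form}. The proof of (iii) does not depend on it, however, since rescaling $\Phi$ by a positive constant leaves the inequality $\Phi'^2\le 2\Phi\Phi''$ unchanged.
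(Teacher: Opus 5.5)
Your parts (i) and (ii) are exactly the paper's argument. For (iii) you take a genuinely different route, and it is correct.

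\emph{The paper's route.} It sets $q=\gamma s$ and forms $V(k)\defeq q^3c''=\psi''q^2+\gamma\psi'^2(1-q)$. It observes that $V(k^\ast)=0$, since $q$ and $\psi'$ both vanish at $k^\ast(n)$. It then shows $V'=\psi'''q^2+\gamma^2\psi'^3(1-q)/q>0$, after the cross terms cancel via $q'=-\gamma\psi'(1-q)/q$, so $V<0$ on the interval.

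\emph{Your route.} You split the target $\Phi'^2(1-y)/y^2<\Phi''$ into two pieces. The first is a universal scalar inequality, $y^2/(1-y)>2(-y-\ln(1-y))$ on $(0,1)$, which involves neither $\gamma$ nor technology. The second is a technology-only inequality, $\psi'(k)^2\le 2|\psi''(k)|(\psi(k^\ast(n))-\psi(k))$, which you get from convexity of $\psi'$ by the tangent-line triangle bound. Both halves are tight to leading order at $k^\ast(n)$: the $y^2$ coefficients in (a) are equal, and (b) holds with equality when $\psi$ is quadratic. This is how your argument absorbs the leading-order cancellation that the paper handles through $V(k^\ast)=0$; strictness then comes from the larger $y^j$ coefficients, $j\ge3$, in (a).

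\emph{What each buys.} The paper's route is shorter once the cancellation in $V'$ is spotted. However, the natural weakening of its hypothesis $\psi'''>0$, namely $V'>0$, involves the solution $q$. Your decomposition shows that what is really used is $\Phi'^2\le 2\Phi\Phi''$, i.e.\ convexity of $\sqrt{\psi(k^\ast(n))-\psi(k)}$. That is a property of $f$ alone, implied by $\psi'''\ge0$ but weaker in general, and it makes the independence of the result from $\gamma$ transparent.

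One correction to your closing remark. The scale invariance of $\Phi'^2\le2\Phi\Phi''$ does not make (iii) independent of the $\gamma$ factor in the integrated relation. Step (a) needs $g(y)$ to equal the same $\Phi$ that appears in your ODE-derived target inequality. With a mis-scaled relation, the chain would instead require $\gamma\psi'^2\le 2|\psi''|(\psi(k^\ast(n))-\psi)$, which fails for large $\gamma$. What saves you is that integrating \eqref{eq:ODE_rho_n} forces $g(y)=\gamma(\psi(k^\ast(n))-\psi(k))$, the form you actually used. You are right that \eqref{eq:gamma_sk_Lambert} as displayed drops this $\gamma$; the paper's appendix uses the correct form.
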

\begin{proof}
See Appendix \ref{sec:app_proof_properties}. 
\end{proof}

It then follows from Corollary \ref{cor:properties} that the saving function is strictly decreasing, and facts (ii) and (iii) together establish that the consumption function is strictly increasing and strictly \textit{concave}. This is in contrast with the CRRA case in which the stable arm for consumption can either be concave or convex depending on the degree of risk aversion. That being said, \cite{Barro2004economic} argue that the concave one is the empirically relevant case.

\subsection{A $O(\rho-n)-$accurate approximation}

With this explicit closed form expression in hand, the objective is to show that as $\rho\to n^+$ the true solution will be arbitrarily close to this explicit expression. This is guaranteed by the following theorem:
\begin{theorem}
\label{thm:smooth_rho}
Let $[k_{1}, k_{2}]$ be a closed interval with $\kmin \le k_{1} < k_{2} < k^{\ast}(n)$ and
$\bar{k}(n) \notin [k_{1}, k_{2}]$. Then there exist $\epsilon > 0$ and a constant $C
\geq 0$ (both depending on $[k_{1}, k_{2}]$ and on the parameters $\gamma, \alpha, \delta, n$)
such that, for every $\rho$ satisfying $0 < \rho - n < \epsilon$, the interval $[k_1,k_2]$
is contained in $(\bar{k}(\rho), k^{\ast}(\rho))$---so that the true saving function
$s(k;\rho)$ is defined on it---and
\begin{equation} \label{eq:bound}
\sup_{k \in [k_{1}, k_{2}]} \bigl| s(k; \rho) - s(k;n) \bigr| \leq
C (\rho - n),
\end{equation}
where $s(k;\rho)$ is the saving function that solves the maximization program
\eqref{eq:disc_util}-\eqref{eq:subsistence_constraint} and $s(k;n)$ is the solution of the
ODE \eqref{eq:ODE_sk} when $\rho=n$. An analogous bound holds for the associated consumption
function $c(k;\rho)$.
\end{theorem}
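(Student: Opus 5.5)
The plan is to treat the saving function $s(k;\rho)$ as the stable manifold of the hyperbolic rest point $(k^\ast(\rho),0)$ of the vector field $F(x;\rho)$ in \eqref{eq:vector_field}. Since $F$ is smooth (indeed analytic) in $(x,\rho)$, I will argue that this manifold, together with its continuation backward in $k$, depends $C^1$ on $\rho$, including at $\rho=n$. I will carry this out in four steps.

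\emph{Step 1: the rest point is a saddle uniformly in $\rho$.} The Jacobian at $(k^\ast(\rho),0)$ is
\[
\begin{pmatrix} 0 & 1\\ -r'(k^\ast)/\gamma & \rho-n \end{pmatrix}.
\]
Its determinant is $r'(k^\ast)/\gamma<0$, so the eigenvalues are real and of opposite sign for every $\rho$ near $n$, including $\rho=n$. The stable manifold theorem with parameters then gives a local stable manifold that depends $C^1$ on $\rho$. Near $k^\ast(\rho)$ this manifold is a graph $s=\sigma(k;\rho)$ with slope equal to the stable eigenvalue, which is negative. This graph is exactly the branch selected by the terminal condition in Lemma~\ref{lem:ode}. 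Because $k^\ast(\rho)$ is smooth in $\rho$, the local graphs satisfy $|\sigma(k;\rho)-\sigma(k;n)|\le C_0(\rho-n)$ on a fixed small neighbourhood $[k^\ast(n)-\eta,k^\ast(n)-\eta/2]$, whose endpoints lie inside both $(\cdot,k^\ast(\rho))$ for small $\rho-n$.

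\emph{Step 2: backward continuation by Gronwall.} For $k\le k^\ast(n)-\eta/2$, I integrate the scalar ODE \eqref{eq:saddle-ode},
\[
s'=(r(k)-n)-\frac{r(k)-\rho}{\gamma s}=:G(k,s;\rho),
\]
backward in $k$. The key requirement is a uniform positive lower bound on $s$. By Lemma~\ref{lem:no-obstruction}\emph{(iii)}, $s(\cdot;n)\in(0,1/\gamma)$ on $(\kmin,k^\ast(n))$. Continuity on the compact set $[k_1,k^\ast(n)-\eta/2]$ then gives $s(\cdot;n)\ge m>0$ there.

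On the strip $\{s\ge m/2\}$, $G$ is Lipschitz in $s$ with constant $L=\sup|r-\rho|/(\gamma m^2/4)$, and $|\partial_\rho G|\le 1/(\gamma m/2)$. A standard Gronwall comparison, run up to the first exit time from the strip, gives
\[
|s(k;\rho)-s(k;n)|\le \Bigl(C_0+\frac{2(k^\ast(n)-k)}{\gamma m}\Bigr)e^{L(k^\ast(n)-k)}(\rho-n).
\]
For $\rho-n<\epsilon$ small, this bound is smaller than $m/2$, so the solution never leaves the strip. That closes the bootstrap and yields \eqref{eq:bound}.

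\emph{Step 3: placing $[k_1,k_2]$ inside the interior regime.} This is the step I expect to be the main obstacle. The threshold $\kbar(\rho)$ is the point where $c(k;\rho)=\cbar$, equivalently $s(k;\rho)=\psi(k)-\cbar$. Lemma~\ref{lem:threshold} gives this crossing at $\rho=n$. I will show the crossing is transversal there, using $s'<0$ from Corollary~\ref{cor:properties} together with $\psi'>0$ on $(\kmin,k^\ast)$. By Step 2 the map $(k,\rho)\mapsto s(k;\rho)-\psi(k)+\cbar$ is $C^1$, so the implicit function theorem gives $|\kbar(\rho)-\kbar(n)|=O(\rho-n)$. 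Since $\kbar(n)\notin[k_1,k_2]$ by hypothesis, a positive gap separates them, and shrinking $\epsilon$ keeps $[k_1,k_2]$ on the interior side, i.e. inside $(\kbar(\rho),k^\ast(\rho))$. The upper endpoint is handled by $k^\ast(\rho)\downarrow k^\ast(n)$ and $k_2<k^\ast(n)$.

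One delicate point is that the hypothesis only says $\kbar(n)\notin[k_1,k_2]$. If $[k_1,k_2]$ lay below $\kbar(n)$, both saving functions would be given by the constrained formula $s=\psi-\cbar$ and the difference would vanish trivially. I will treat that case separately. A second delicate point is that the uniform lower bound on $s$ must hold on the relevant compact set, which is where Lemma~\ref{lem:no-obstruction} enters.

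\emph{Step 4: consumption.} Finally, $c(k;\rho)-c(k;n)=-(s(k;\rho)-s(k;n))$ exactly, because $\psi$ does not depend on $\rho$. The consumption bound therefore follows immediately from \eqref{eq:bound} with the same constant $C$.
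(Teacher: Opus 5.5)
Your argument is sound in substance, but its central step takes a different route from the paper. The paper first proves (Lemma \ref{lem:smooth}) that $s(k;\rho)$ is jointly $C^1$ in $(k,\rho)$. It appends $\dot\rho=0$ to get a local stable manifold that depends smoothly on $\rho$, and flows backward in time from an anchor point on it, using $C^1$ dependence of the planar flow on $(t,x_0,\rho)$. It then inverts $(t,\rho)\mapsto(K(t;\rho),\rho)$ to recover $s(k;\rho)=S(t(k;\rho);\rho)$. The $O(\rho-n)$ bound follows from the mean value theorem with $M=\sup|\partial s/\partial\rho|$ on a compact rectangle, and the same $C^1$ regularity feeds the implicit function theorem for $\kbar(\rho)$.

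You instead work with the time-eliminated scalar ODE in $k$ and close a Gronwall estimate by a bootstrap on the strip $\{s\ge m/2\}$. This gives an explicit constant. It also settles cleanly something the paper handles loosely: that the perturbed orbit stays uniformly away from $s=0$ and remains a graph over the whole interval. You get this without hitting times or a global inverse map.

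What Gronwall does not give you is differentiability in $\rho$; it only yields a Lipschitz bound at $\rho=n$. So the claim in your Step 3 that ``by Step 2'' the map $(k,\rho)\mapsto s(k;\rho)-\psi(k)+\cbar$ is $C^1$ is unjustified as written. There are two easy repairs.
\begin{enumerate}
\item Cite differentiable dependence on parameters for the scalar ODE. Its right-hand side is $C^\infty$ in $(k,s,\rho)$ on $\{s>0\}$, and the initial value $\sigma(k_a;\rho)$ is $C^1$ in $\rho$ by your Step 1.
\item Drop the implicit function theorem. Run Step 2 on a compact interval containing $\kbar(n)$ in its interior; this is allowed because the closed form gives $s(\cdot;n)>0$ on all of $(0,k^\ast(n))$. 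Then combine three facts: $|c(k;\rho)-c(k;n)|\le C(\rho-n)$ there; $c'(\cdot;n)\ge\kappa>0$ near $\kbar(n)$; and $c(\cdot;\rho)$ is strictly increasing, with slope $(r(k)-\rho)/(\gamma s)>0$. Together these give $|\kbar(\rho)-\kbar(n)|\le C(\rho-n)/\kappa$.
\end{enumerate}
Either way you need Step 2 on an interval reaching below $\kbar(n)$, which your write-up does not say explicitly.

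A small slip: $k^\ast(\rho)$ is decreasing in $\rho$, so $k^\ast(\rho)\uparrow k^\ast(n)$ as $\rho\to n^+$, not $\downarrow$. The upper-endpoint containment still follows from $k_2<k^\ast(n)$ and continuity. Your separate treatment of $[k_1,k_2]$ below $\kbar(n)$ matches the paper's Case 1.
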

\begin{proof}
See Appendix \ref{sec:app_proof_smooth_rho}.
\end{proof}
Note that I have chosen $k^\ast(n)=\lim_{\rho\to n^+}k^\ast(\rho)$ as endpoint so that it does not vary with $\rho$, which allows me to state the uniform convergence property on a fixed compact set. The choice of $\epsilon$ small enough ensures that $k_2<k^\ast(\rho)$ so that the saving/consumption functions are well defined.

With this in mind, Theorem \ref{thm:smooth_rho} states that the closed-form solution for the auxiliary problem $\rho=n$ provides a uniformly $O(\rho-n)$-accurate approximation to the true saddle path on compact subsets of the interior state space; the error can be made arbitrarily small by taking $\rho\to n^+$. This justifies focusing on the case $\rho=n$ in the remainder of the paper. The consumption function from Proposition \ref{prop:closed_form} is illustrated in the following figure, plotted both as a function of capital and time. 

\begin{figure}[ht]
	\centering
	\bigskip
	\caption{From subsistence to steady state consumption}

	{\small
\includegraphics[width=\textwidth]{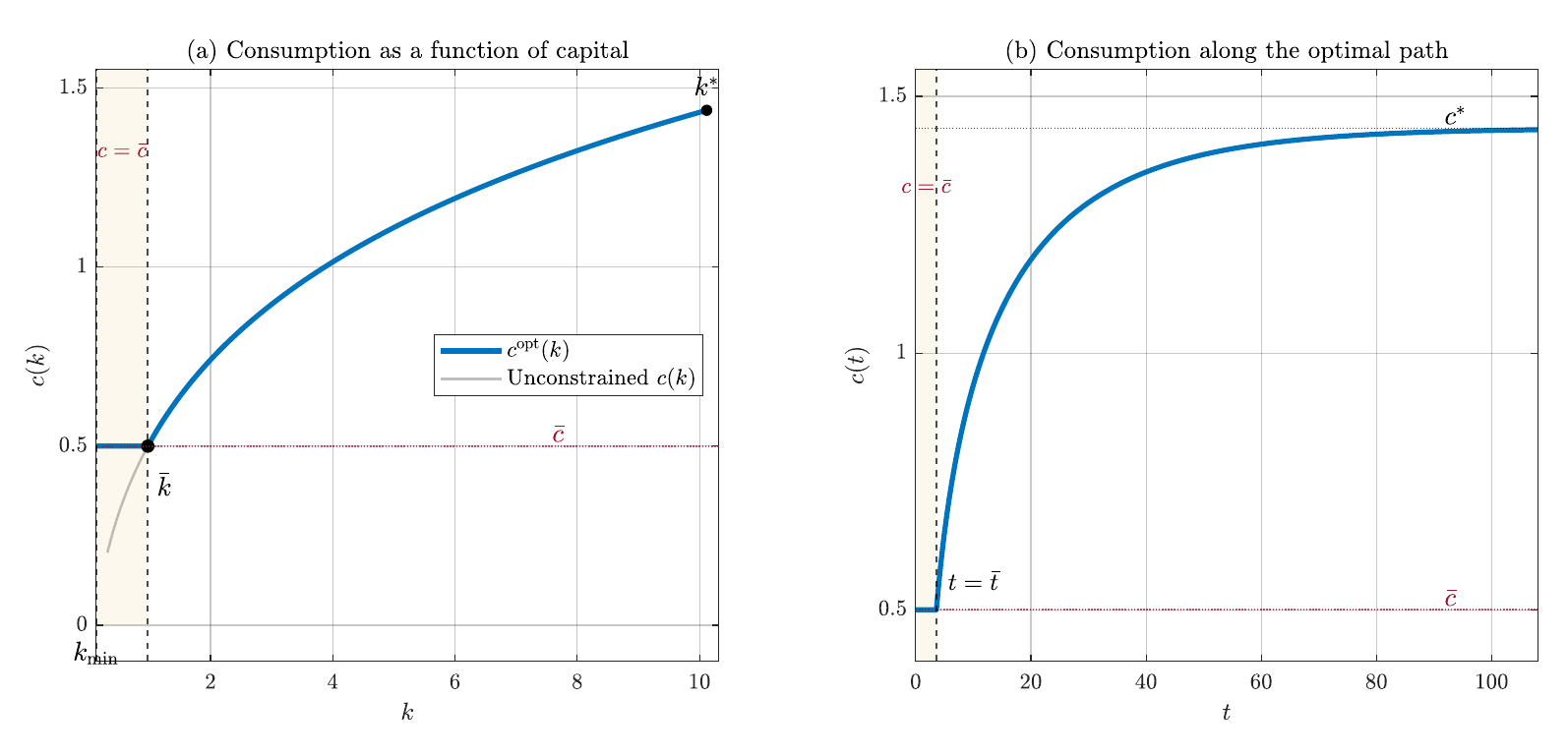}
	}
	\begin{minipage}{0.75\linewidth}
	{
 	 \footnotesize\emph{Notes: parameter values $\alpha=1/3$, $\gamma=2$, $\delta=0.05$, $\rho=n=0.02$, and $\cbar=0.5$.} 
	 }
	\end{minipage}
    \label{fig:Consumption_Function}
\end{figure}

Given the initial value of capital, that economy doesn't have enough resources to consume above and beyond the subsistence level. Therefore, for a small time period it consumes at the subsistence level. Once it has accumulated enough capital, this economy experiences a take-off and starts growing until it converges to its steady state. At this point, one might wonder how fast the economy is growing along the trajectory. In particular, one might wonder whether the convergence speed is faster close to $\cbar$ or to $c^\ast$. While common intuition would suggest that the speed of convergence is faster during the initial take-off, this is not necessarily the case here. Indeed, it turns out that the degree of risk aversion is crucial in shaping the speed of convergence. This is detailed in the following proposition.

\begin{prop}
\label{prop:threshold_gamma}
Let $\rho=n$, fix a reference level $k_0\in(k_{min},k^\ast(n))$, and let $\beta(k;\gamma)\defeq s(k;n)/(k^\ast(n)-k)$ be the speed of convergence to $k^\ast(n)$, with $\beta(k^\ast;\gamma)$ the limit as $k\uparrow k^\ast(n)$. Then $\beta(k^\ast;\gamma)=\sqrt{|f''(k^\ast(n))|/\gamma}$, and the ratio $R(\gamma)\defeq \beta(k_0;\gamma)/\beta(k^\ast;\gamma)$ is strictly decreasing on $(0,\infty)$ with $\lim_{\gamma\to0^+}R(\gamma)>1$ and $\lim_{\gamma\to\infty}R(\gamma)=0$. Consequently there is a unique $\underline\gamma>0$ such that convergence is front-loaded ($\beta(k_0)>\beta(k^\ast)$) for $\gamma<\underline\gamma$ and back-loaded for $\gamma>\underline\gamma$.
\end{prop}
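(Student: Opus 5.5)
The plan is to use the implicit form \eqref{eq:gamma_sk_Lambert} rather than the Lambert expression of Proposition \ref{prop:closed_form}. First I note that at $\rho=n$ one has $k^\ast(n)=(\alpha/(\delta+n))^{1/(1-\alpha)}=k_{\psi}$, so $\psi'(k^\ast(n))=0$. Set $\Delta(k)\defeq\psi(k^\ast(n))-\psi(k)$, which is $>0$ on $(\kmin,k^\ast(n))$ by Lemma \ref{lem:kmin}\emph{(i)}. Define $g(x)\defeq -x-\ln(1-x)$ on $(0,1)$. Then $g$ is strictly increasing, with $g'(x)=x/(1-x)$, $g(0^+)=0$ and $g(1^-)=\infty$. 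Equation \eqref{eq:gamma_sk_Lambert} says exactly $g(\gamma s)=\gamma\Delta(k)$. Hence $s(k;n)=h(\gamma\Delta(k))/\gamma$, where $h\defeq g^{-1}:(0,\infty)\to(0,1)$. (I use the explicit formula at $k_0$ as the definition of $s(k_0;n)$, consistent with Proposition \ref{prop:closed_form}.)

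\emph{Limit at the steady state.} Since $g(x)=x^2/2+O(x^3)$, we have $h(y)=\sqrt{2y}\,(1+o(1))$ as $y\to0^+$. A second-order expansion with $\psi'(k^\ast)=0$ gives $\Delta(k)=\tfrac12|f''(k^\ast)|(k^\ast-k)^2(1+o(1))$. Combining these, $s(k;n)/(k^\ast-k)\to\sqrt{2\cdot\tfrac12|f''(k^\ast)|/\gamma}=\sqrt{|f''(k^\ast(n))|/\gamma}$, which gives the formula for $\beta(k^\ast;\gamma)$.

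\emph{Monotonicity of $R$.} Write $\Delta_0\defeq\Delta(k_0)$ and $m(y)\defeq h(y)/\sqrt y$. Then
\begin{equation*}
R(\gamma)=\frac{h(\gamma\Delta_0)/\gamma}{(k^\ast-k_0)\sqrt{|f''(k^\ast)|/\gamma}}=\frac{\sqrt{\Delta_0}}{(k^\ast-k_0)\sqrt{|f''(k^\ast)|}}\;m(\gamma\Delta_0),
\end{equation*}
so $R$ is strictly decreasing in $\gamma$ if and only if $m$ is strictly decreasing. Now $(\ln m)'(y)=h'(y)/h(y)-1/(2y)$. Substituting $x=h(y)$, $y=g(x)$ and $h'=1/g'(x)$, this is negative if and only if $2g(x)<xg'(x)$. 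Define $\varphi(x)\defeq x^2/(1-x)+2x+2\ln(1-x)$, so the inequality reads $\varphi(x)>0$. A direct computation gives $\varphi(0)=0$ and $\varphi'(x)=x^2/(1-x)^2>0$, so the inequality holds on $(0,1)$. I expect this step to be the main obstacle: the reduction to the one-variable function $m$ is what makes it tractable, and the crux is that the sign computation for $\varphi'$ simplifies so cleanly.

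\emph{Limits and conclusion.} As $\gamma\to\infty$, $h(\gamma\Delta_0)\to1$, so $m(\gamma\Delta_0)\to0$ and hence $R\to0$. As $\gamma\to0^+$, $m\to\sqrt2$, so $\lim R=\sqrt{2\Delta_0}/\bigl((k^\ast-k_0)\sqrt{|f''(k^\ast)|}\bigr)$. To show this exceeds $1$, I use Taylor's formula with integral remainder together with $\psi'(k^\ast)=0$:
\begin{equation*}
\Delta_0=\int_{k_0}^{k^\ast}(k-k_0)|f''(k)|\,\mathrm dk .
\end{equation*}
Since $|f''(k)|=\alpha(1-\alpha)k^{\alpha-2}$ is strictly decreasing, $|f''(k)|>|f''(k^\ast)|$ for $k<k^\ast$. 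Therefore $\Delta_0>\tfrac12|f''(k^\ast)|(k^\ast-k_0)^2$, i.e. $\lim_{\gamma\to0^+}R>1$. Finally, $R$ is continuous and strictly decreasing, runs from a value above $1$ down to $0$, and so crosses $1$ exactly once by the intermediate value theorem. This defines the unique $\underline\gamma$, with $R>1$ (front-loading) for $\gamma<\underline\gamma$ and $R<1$ (back-loading) for $\gamma>\underline\gamma$.
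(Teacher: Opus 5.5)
Your proof is correct and follows the paper's argument: the same reduction of $R(\gamma)$ to a constant multiple of $q/\sqrt{\gamma}$ with $q=G^{-1}(\gamma\Delta_0)$, the same one-variable sign inequality (your $\varphi(x)$ equals $-H(x)/(1-x)$ for the paper's $H(q)=2G(q)(1-q)-q^2$), and the same Taylor-remainder comparison using that $|f''|$ is decreasing for the $\gamma\to0^+$ limit. The only deviation is that you obtain $\beta(k^\ast;\gamma)$ directly from the closed form via $h(y)\sim\sqrt{2y}$ and $\Delta(k)\sim\tfrac12|f''(k^\ast)|(k^\ast-k)^2$, whereas the paper uses L'H\^opital and the stable eigenvalue; your version is slightly more self-contained.
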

\begin{proof}
See Appendix \ref{sec:app_proof_threshold_gamma}.   
\end{proof}
In this proposition, $\beta(k^\ast)$ is computed using the first order approximation of the saving function around the steady state $k^\ast$. As a result, it is a direct counterpart to the measure derived in closed form for a CRRA utility function in \cite{Barro2004economic}. The explicit closed-form expression developed in the current paper allows one to compare it to the one that applies for a low reference value $k_0$ at the early stages of development. The comparison is made at a fixed $k_0$ rather than at $\bar k(\gamma)$ because $\bar k(\gamma)$ is itself a function of $\gamma$, whose movement would confound the comparison; with a fixed $k_0$ the threshold exists and is unique for every admissible parameter configuration.

In the first case, the consumption path is front-loaded. If risk-aversion is low enough, then this economy devotes a significant fraction of its resources to capital accumulation, which makes it initially fast. In the second case, risk aversion is relatively high. As a result, this economy accumulates capital relatively slowly. When the stock of capital has reached a certain point close to the steady state, the accumulation becomes faster.

\section{Conclusion}

In this paper, I have developed a closed form expression for the consumption/savings function in the special case of a neoclassical growth model where the utility function has a CARA specification and the discount factor is equal to the growth rate of population. I have shown that it provides a uniformly $O(\rho-n)$-accurate approximation of the solution on compact subsets of the interior state space as $\rho\to n^+$. 

Using this closed form expression, I have derived properties of the consumption/saving function as well as for the speed of convergence both close to and far from the steady state. This approach could be used in teaching in order to exemplify a case where one can construct the stable arm of a neoclassical growth model in closed form.

\bibliographystyle{apalike2}
\bibliography{CARA_Ramsey}
\appendix

\section{Proof of Lemma \ref{lem:kmin}}
\label{sec:app_proof_Lemma_kmin}

By definition of $\psi(k)$, it follows that $\psi'(k)=\alpha k ^{\alpha-1}-(\delta+n)$ and $\psi''(k)=\alpha(\alpha-1)k^{\alpha-2}$. Given that $\alpha<1$, $\psi''(k)<0$ and $\psi$ is strictly concave on $(0,\infty)$. It is straightforward to observe that $\psi(0)=0$. In addition, from the expression of $\psi'(k)$ it follows that $\psi'(k)>0$ iff $k<k_{\psi}\defeq (\alpha/(\delta+n))^{1/(1-\alpha)}$ and vice versa for $\psi'(k)<0$ so that $\psi$ is strictly increasing on $(0,k_{\psi})$ and strictly decreasing on $(k_{\psi},\infty)$. As a result, $\psi$ attains its maximum at $k_{\psi}$. Furthermore, $k^\alpha/((\delta+n)k)\to 0$ as $k\to\infty$ so that the linear term dominates for large $k$ and $\psi(k)\to -\infty$ as $k\to\infty$. This proves part \emph{(i)}.

Given that $(\alpha/(\delta+\rho))^{1/(1-\alpha)}$ is clearly decreasing in $\rho$ and that $\rho>n$ by assumption, it follows that $k^*<k_{\psi}$. This proves part \emph{(ii)}.

The properties of $\psi(k)$ imply that $\psi(k)-\cbar$ is also strictly increasing on $(0,k_{\psi})$. Furthermore, $\psi(k^*)=c^*>\cbar$ by assumption so that $\psi(k^*)-\cbar>0$. Given that $\psi(k)-\cbar$ attains its maximum at $k_{\psi}$, it follows that $\psi(k_{\psi})-\cbar>0$. Therefore, there exists a value $k_{min}\in (0,k_{\psi})$ such that $\psi(k_{min})=\cbar$. For $k\in (k_{\psi},\infty)$, $\psi(k)-\cbar$ is strictly decreasing and goes to $-\infty$. Given that it starts out strictly positive at $k_{\psi}$, it follows that there exists a value $k_{max}\in (k_{\psi},\infty)$ such that $\psi(k_{max})=\cbar$. It follows immediately that $\psi(k)-\cbar>0$ on $(\kmin,\kmax)$ and $\psi(k)-\cbar<0$ outside $[\kmin,\kmax]$. This proves part \emph{(iii)}.

Given the results in \emph{(iii)} and the fact that $\psi(k^*)-\cbar>0$, it follows that $k_{min}<k^*(\rho)<k_{max}$. This proves part \emph{(iv)}.

\section{Proof of Corollary \ref{cor:properties}}
\label{sec:app_proof_properties}
 
For brevity, in this proof I drop the $n$-dependence and
write $k^{\ast}$, $s(k)$ for $k^{\ast}(n)$ and
$s(k;n)$ respectively, and I set $q(k) \defeq \gamma\, s(k)$, so that the closed form of Proposition \ref{prop:closed_form} reads $q(k)+\ln(1-q(k))=\gamma\bigl(\psi(k)-\psi(k^\ast)\bigr)$ with $\psi(k):=f(k)-(\delta+n)k$. I use the saddle-path ODE \eqref{eq:ODE_rho_n}
in the form
\begin{equation*}
s(k)s'(k) = \psi'(k) \left[ s(k) - \tfrac{1}{\gamma} \right]
\end{equation*}
together with Lemma \ref{lem:no-obstruction} \emph{(iii)}, which establishes that $s(k)
\in (0, 1/\gamma)$, or equivalently that $q(k) \in (0, 1)$, on $(\kbar(n), k^{\ast})$.

\emph{(i) Monotonicity of $s(k)$.} Notice that:
\begin{align*}
s'(k) = \frac{\psi^{'}(k)}{s(k)}\left[s(k)-\frac{1}{\gamma}\right]  \end{align*}
Given that $s(k)\in(0,1/\gamma)$ on $k\in(\bar{k}(n),k^\ast)$, it immediately follows that $s'(k)<0$ so that savings are a strictly decreasing function of capital. 
  
\medskip
\emph{(ii) Monotonicity of $c$.} By the resource constraint, $c(k) = f(k) - (\delta+n)k -
s(k)$, hence
\begin{equation*}
c'(k) = \psi'(k) - s'(k) = \psi'(k) -\frac{\psi'(k)}{s(k)}\left[s(k)-\frac{1}{\gamma}\right]
= \frac{\psi'(k)}{q(k)} = \frac{r(k) - n}{\gamma s(k)}.
\end{equation*}
On $(\kbar(n), k^{\ast})$, $\psi'(k) > 0$ and $q(k) > 0$, hence $c'(k) > 0$.
 
\emph{(iii) Concavity of $c$.} The closed-form solution is given by the implicit
relation $q(k) + \ln(1 - q(k)) = \gamma(\psi(k) - \psi(k^{\ast}))$. Differentiating
both sides with respect to $k$ gives
\begin{equation*}
\left[1 - \frac{1}{1 - q(k)}\right] q'(k) = -\frac{q(k)}{1 - q(k)} q'(k) = \gamma\psi'(k),
\end{equation*}
which yields
\begin{equation} \label{eq:qprime}
q'(k) = - \frac{\gamma \psi'(k) (1 - q(k))}{q(k)}.
\end{equation}
Since $c(k) = \psi(k) - q(k)/\gamma$, we have $c'(k) = \psi'(k) - q'(k)/\gamma$
and $c''(k) = \psi''(k) - q''(k)/\gamma$. Differentiating \eqref{eq:qprime}
yields:
\begin{align*}
q''(k) &= -\gamma\left\{\psi^{''}(k)\frac{1-q(k)}{q(k)}+\psi^{'}(k)\frac{-q'(k)q(k)-q'(k)(1-q(k))}{(q(k))^2}\right\}\\
&= -\gamma\left\{\psi^{''}(k)\frac{1-q(k)}{q(k)}-\psi^{'}(k)\frac{q'(k)}{(q(k))^2}\right\}\\
&= -\gamma\left\{\psi^{''}(k)\frac{1-q(k)}{q(k)}+\gamma(\psi^{'}(k))^2\frac{1-q(k)}{(q(k))^3}\right\}\\
&= -\gamma\frac{1-q(k)}{q(k)}\left\{\psi^{''}(k)+\gamma\frac{(\psi^{'}(k))^2}{(q(k))^2}\right\},
\end{align*}
where I have used equation \eqref{eq:qprime} to substitute for $q'(k)$ on the third line. It follows that
\begin{equation*}
c''(k) = \psi''(k) - \frac{q''(k)}{\gamma} = \frac{\psi''(k)}{q(k)} + \frac{\gamma \psi'(k)^{2} (1 - q(k))}{q(k)^{3}}.
\end{equation*}
Multiplying by $q(k)^{3}$, which is strictly positive on $(\kbar(n), k^{\ast})$,
\begin{equation} \label{eq:V-def}
V(k) \defeq q(k)^{3} c''(k) = \psi''(k) q(k)^{2} + \gamma \psi'(k)^{2} (1 - q(k)).
\end{equation}
The sign of $c''(k)$ coincides with the sign of $V(k)$. The proof proceeds by
showing that $V(k^{\ast}) = 0$ and that $V$ is strictly increasing on $(\kbar(n),
k^{\ast})$, so that $V(k) < V(k^{\ast}) = 0$ for $k < k^{\ast}$.
 
\smallskip
\emph{Step 1: $V(k^{\ast}) = 0$.} At $k = k^{\ast}$, both $q(k^{\ast}) = 0$ and
$\psi'(k^{\ast}) = r(k^{\ast}) - n = 0$. Substituting into \eqref{eq:V-def} gives
$V(k^{\ast}) = 0$.
 
\smallskip
\emph{Step 2: $V'(k) > 0$ on $(\kbar(n), k^{\ast})$.} Differentiating
\eqref{eq:V-def},
\begin{equation*}
V'(k) = \psi'''(k) (q(k))^{2} + 2\psi''(k) q(k) q'(k) + 2\gamma \psi'(k) \psi''(k) (1 - q(k)) - \gamma (\psi'(k))^{2} q'(k).
\end{equation*}
Substituting $q'(k) = -\gamma \psi'(k)(1 - q(k))/q(k)$ from \eqref{eq:qprime} into the
second and third terms so that they cancel out, one is left with:
\begin{equation} \label{eq:Vprime}
V'(k) = \psi'''(k) q(k)^{2} + \gamma^{2}\frac{1 - q(k)}{q(k)}\psi'(k)^{3} .
\end{equation}
One can now check that both terms in \eqref{eq:Vprime} are strictly positive on
$(\kbar(n), k^{\ast})$:
\begin{itemize}
\item \emph{Third derivative of $\psi$.} For a Cobb-Douglas production function $f(k) = k^{\alpha}$,
\begin{equation*}
\psi'''(k) = f'''(k) = \alpha(\alpha - 1)(\alpha - 2) k^{\alpha - 3}.
\end{equation*}
Since $\alpha \in (0, 1)$, both factors $(\alpha - 1)$ and $(\alpha - 2)$ are
negative so that their product is positive. Combined with $\alpha > 0$ and
$k^{\alpha - 3} > 0$, this gives $\psi'''(k) > 0$. With $q(k)^{2} > 0$, the
first term in \eqref{eq:Vprime} is strictly positive.
 
\item \emph{First derivative of $\psi$.} On $(\kbar(n), k^{\ast})$, $\psi'(k) =
r(k) - n > 0$, hence $\psi'(k)^{3} > 0$. With $\gamma > 0$, $1 - q(k) > 0$,
and $q(k) > 0$, the second term in \eqref{eq:Vprime} is strictly positive.
\end{itemize}
Therefore $V'(k) > 0$ on $(\kbar(n), k^{\ast})$. Combining Steps 1 and 2, $V(k) < V(k^{\ast}) = 0$ for
all $k \in (\kbar(n), k^{\ast})$. Since $q(k)^{3} > 0$, one can conclude that $c''(k) =
V(k)/q(k)^{3} < 0$ on $(\kbar(n), k^{\ast})$.
 
This completes the proof. \hfill $\blacksquare$

\section{Proof of Theorem \ref{thm:smooth_rho}}
\label{sec:app_proof_smooth_rho}

I now establish that the closed-form policy $(s(k;n), c(k;n)$ is the
uniform limit of the constrained optimal policies $(s(k;\rho), c(k;\rho))$ as
$\rho \to n^+$, on compact subsets of the interior state space excluding a
neighborhood of the kink.

\subsection{Setup and Smooth Dependence}

Throughout this appendix the vector field is smooth on the interior (Lemma \ref{lem:no-obstruction}(ii)), so all manifold and flow results are stated for smooth vector fields. I first record, in the paper's own notation, the four results from the theory of dynamical systems that are used below, so that the reader need not consult \cite{Wiggins2003introduction} in parallel. \emph{(P1) Local invariant manifolds:} a hyperbolic fixed point of a smooth vector field has smooth local stable and unstable manifolds, tangent to the corresponding eigenspaces and representable there as graphs (Theorem 3.2.1 in \cite{Wiggins2003introduction}). \emph{(P2) Global continuation:} the global stable manifold is the transport of the local one along the flow, $W^s=\bigcup_{t\ge0}x(-t,x_0;\rho)$ with $x_0\in \mathrm{W^s_{loc}}$ and where $x(t,x_0;\rho)$ denotes the solution of the dynamical system at time $t$ (going backwards) for initial condition $x_0$ in the local stable manifold $W^s_{loc}$. It is invariant and inheriting smoothness because trajectories of a smooth vector field are smooth in their initial conditions (\S3.5, eq.3.5.13 in \cite{Wiggins2003introduction}; stated there for the unstable manifold, the stable case following by time reversal). \emph{(P3) Continuation up to the boundary:} a solution extends uniquely until it reaches the boundary of any compact set containing it (Theorem 7.2.1 in \cite{Wiggins2003introduction}). \emph{(P4) Smooth dependence:} the flow $x(t,x_0;\rho)$ is jointly smooth in $t$, $x_0$ and $\rho$ (Theorem 7.3.1 in \cite{Wiggins2003introduction}). The relevant vector field is reproduced here for convenience:
\begin{align}
\dot{x} = \begin{pmatrix}
\dotk\\
\dot{s}
\end{pmatrix}    
=
\begin{pmatrix}
s \\
\left[r(k)-n\right]s - \frac{r(k)-\rho}{\gamma} & 
\end{pmatrix}
\defeq F(x;\rho)
\label{eq:vector_field_app}
\end{align}
For each $\rho > n$, the planner's problem admits a unique optimal saddle path. Denote
the corresponding optimal saving and consumption functions by $s(k; \rho)$ and $c(k;
\rho)$. The threshold at which the constraint $c \geq \bar{c}$ begins to bind is denoted
$\bar{k}(\rho)$, and the steady state is $k^{*}(\rho)$. Using standard results in the theory of hyperbolic dynamical systems with parameters, one can show that the global solution for $\rho>n$ varies smoothly with $\rho$. Before stating the result and its proof formally, it needs to be verified first that the region traversed by the saddle path contains no degeneracies that would obstruct such a global approximation. This is established in the following Lemma.

\begin{lemma}[Absence of obstructions along the saddle path] \label{lem:no-obstruction}
For every $\rho$ in a neighbourhood of $n$, the system \eqref{eq:vector_field} has the
following properties on the half-plane $\{(k, s) : k > 0\}$:
\begin{enumerate}
\item[(i)] The only equilibrium is $(k^{*}(\rho), 0)$, which is a hyperbolic saddle;
\item[(ii)] The vector field is $C^{\infty}$ on every compact subset of the interior;
\item[(iii)] The saddle path is bounded: $0 < s(k; \rho) \leq 1/\gamma$ for every $k \in
(\bar{k}(\rho), k^{*}(\rho))$;
\item[(iv)] The saddle path admits no homoclinic connection: stable and unstable
eigendirections at $(k^{*}(\rho), 0)$ are distinct one-dimensional manifolds that do
not reconnect.
\end{enumerate}
\end{lemma}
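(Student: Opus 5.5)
The plan is to treat the four items in increasing order of difficulty. Items (i), (ii) and (iv) follow from direct inspection of the vector field \eqref{eq:vector_field}. Item (iii) needs a first-integral-type argument that is uniform in $\rho$ on both sides of $n$.

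\emph{Items (i), (ii) and (iv).}
\begin{itemize}
\item For (i), an equilibrium requires $\dot k=s=0$. The second component then reduces to $-(r(k)-\rho)/\gamma=0$. Since $r$ is strictly decreasing on $(0,\infty)$ and ranges from $+\infty$ to $-\delta$, the only solution is $k^\ast(\rho)$. This holds for every $\rho>-\delta$, in particular on any neighbourhood of $n$.
\item The Jacobian at $(k^\ast(\rho),0)$ is
\[
\begin{pmatrix} 0 & 1\\ -r'(k^\ast)/\gamma & \rho-n\end{pmatrix}.
\]
Its determinant is $r'(k^\ast)/\gamma<0$, because $r'=\alpha(\alpha-1)k^{\alpha-2}<0$. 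The eigenvalues are therefore real, of opposite signs and nonzero, so the equilibrium is a hyperbolic saddle.
\item Item (ii) is immediate: $r(k)=\alpha k^{\alpha-1}-\delta$ is $C^\infty$ on $k>0$, and $F$ is polynomial in $s$ with coefficients built from $r$.
\item For (iv), the two eigenvalues are distinct, so the eigendirections are distinct lines. To exclude reconnection, I would reuse the argument of Lemma \ref{lem:kdot_pos}, whose proof only uses $r(k)>\rho$ for $k<k^\ast(\rho)$ and hence is valid for any $\rho$. On the branch of $W^s$ with $k<k^\ast(\rho)$ we have $s>0$ for all $t$, so $k(t)$ is strictly increasing along it. Followed backward in time, this branch moves monotonically away from $k^\ast(\rho)$ and therefore cannot emanate from the equilibrium. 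So it does not coincide with a branch of $W^u$, and no homoclinic loop exists.
\end{itemize}

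\emph{Item (iii), lower bound and key identity.} The lower bound $s>0$ is again Lemma \ref{lem:kdot_pos}. For the upper bound, I would avoid the barrier line $s=1/\gamma$. On that line $\dot s=(\rho-n)/\gamma$, which has the wrong sign when $\rho<n$, so it is not a barrier on both sides of $n$. Instead I use the quantity suggested by \eqref{eq:gamma_sk_Lambert}:
\[
G(k)\defeq \gamma s(k)+\ln\bigl(1-\gamma s(k)\bigr)-\gamma\psi(k).
\]
Write $r-\rho=(r-n)-(\rho-n)$ in \eqref{eq:saddle-ode}. This gives
\[
ss'=-\psi'(1-\gamma s)/\gamma+(\rho-n)/\gamma,
\]
and a short computation then yields the exact identity
\[
G'(k)=-\frac{\gamma(\rho-n)}{1-\gamma s(k)},
\]
valid wherever $\gamma s<1$. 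At $\rho=n$, $G$ is constant, which recovers the closed form. Set $h(q)\defeq q+\ln(1-q)$. Then $h(\gamma s)=G+\gamma\psi$, with $h(q)\to-\infty$ as $q\to1^-$.

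\emph{Item (iii), bootstrap (the main step).} Let $L\defeq k^\ast(n+\epsilon_0)-\kmin$, which bounds the length of every interval $(\bar k(\rho),k^\ast(\rho))$ for $|\rho-n|<\epsilon_0$. Let $M\defeq\sup_{k\in[\kmin,k^\ast(n+\epsilon_0)]}\gamma|\psi(k)-\psi(k^\ast(\rho))|$, which is uniformly bounded in $\rho$.
\begin{enumerate}
\item Choose $\eta\in(0,1)$ with $h(1-\eta)<-M-2$. This is possible because $h(q)\to-\infty$ as $q\to1^-$.
\item Start at $k^\ast(\rho)$, where $s=0$, so $1-\gamma s=1>\eta$. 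Integrate backward in $k$. As long as $1-\gamma s\ge\eta$, the identity gives $|G(k)-G(k^\ast(\rho))|\le\gamma|\rho-n|L/\eta$.
\item In that case $h(\gamma s(k))\ge -M-\gamma|\rho-n|L/\eta$.
\item Take $|\rho-n|<\eta/(\gamma L)$. Then $h(\gamma s)\ge -M-1>h(1-\eta)$. Since $h$ is decreasing on $(0,1)$, this forces $1-\gamma s>\eta$ strictly.
\item A continuity argument therefore shows that the region $\{1-\gamma s\ge\eta\}$ is never left on $(\bar k(\rho),k^\ast(\rho))$, giving $s<(1-\eta)/\gamma\le1/\gamma$.
\end{enumerate}

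The delicate points are the uniformity of $L$ and $M$ in $\rho$, which I handle by restricting $k$ to the fixed compact set $[\kmin,k^\ast(n+\epsilon_0)]$ and using $\bar k(\rho)\ge\kmin$, and the strictness in step 4 that closes the continuity argument. I expect this bootstrap to be the main obstacle.
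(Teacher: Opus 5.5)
Your treatment of (i) and (ii) is the same as the paper's. For (iii) and (iv) you take different routes, and the argument is correct up to two small repairs.

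For (iii), the paper splits into two cases. For $\rho>n$, the slope $s'=\rho-n>0$ on the line $s=1/\gamma$ rules out a rightmost crossing. For $\rho=n$, $\hat s\equiv1/\gamma$ is itself a solution, so uniqueness forbids touching it. This is short and exploits the special role of that line. But, as you observe, it says nothing about $\rho<n$, where the line is no barrier. That matters because the lemma is stated on a neighbourhood of $n$, and the proof of Lemma \ref{lem:smooth} works on an open neighbourhood $\mathcal V'$.

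Your identity $G'(k)=-\gamma(\rho-n)/(1-\gamma s)$ perturbs the first integral behind \eqref{eq:gamma_sk_Lambert}. Together with the bootstrap it yields $s\le(1-\eta)/\gamma$ with $\eta$ uniform on a two-sided neighbourhood. So it covers the case the paper omits and gives a uniform margin. The smallness of $|\rho-n|$ is only needed for $\rho<n$. For $\rho\ge n$, $G$ is nonincreasing in $k$, so $h(\gamma s(k))\ge\gamma\bigl(\psi(k)-\psi(k^\ast(\rho))\bigr)$ on any stretch where $\gamma s<1$, and the continuity argument closes with no condition on $\rho-n$.

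The first repair: $k^\ast(\rho)=(\alpha/(\rho+\delta))^{1/(1-\alpha)}$ is decreasing in $\rho$. So $L$ and $M$ must be built from $k^\ast(n-\epsilon_0)$, not $k^\ast(n+\epsilon_0)$. You can also replace $\kmin$ by $0$, which spares you justifying $\bar k(\rho)\ge\kmin$ for $\rho\ne n$.

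For (iv), the paper follows both branches of $W^u$ forward and shows each escapes monotonically; you follow the saddle-path branch of $W^s$ backward. That settles the saddle path itself. The second repair: to conclude that no homoclinic loop exists at all, you must also dispose of the other branch of $W^s$. It is the mirror image: by the reflected argument of Lemma \ref{lem:kdot_pos}, $s<0$ for all $t$ on it. Hence $k(t)$ stays above $k^\ast(\rho)$ and increases as $t\to-\infty$, so that branch cannot lie in $W^u$.
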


\begin{proof}
\emph{(i)} Equilibria require $s = 0$ and $r(k) = \rho$. The Inada conditions on $f$
ensure that $r : (0, \infty) \to \mathbb{R}$ is strictly decreasing with range $(-\delta,
\infty)$, so $r(k) = \rho$ has a unique positive solution $k^{*}(\rho)$. The Jacobian of
the right-hand side of \eqref{eq:vector_field} at $(k^{*}(\rho), 0)$ is
\begin{equation*}
\mathbf{J}=
\begin{pmatrix} 0 & 1 \\ -\frac{f''(k^{*}(\rho))}{\gamma} & \rho - n \end{pmatrix},
\end{equation*}
where I have used $r(k^\ast(\rho))=\rho$ at the steady state. This implies that $\det{\mathbf{J}} = f''(k^{*}(\rho))/\gamma < 0$ (since $f'' < 0$ by concavity), giving
real eigenvalues of opposite sign. Hence $(k^{*}(\rho), 0)$ is a hyperbolic saddle for $\rho>n$ \textemdash see Definition 1.2.6 in \cite{Wiggins2003introduction}.

\emph{(ii)} Immediate from $f(k) = k^{\alpha}$ being $C^{\infty}$ on $(0, \infty)$.

\emph{(iii)} \noindent The lower bound is immediate from Lemma \ref{lem:kdot_pos}. In order to derive the upper bound, write the ODE in normal form (valid for $s > 0$):
\begin{equation}\label{eq:normal}
  s'(k) = \bigl(r(k) - n\bigr) - \frac{r(k) - \rho}{\gamma s(k)}.
\end{equation}
I distinguish two cases.\\
\noindent\emph{Case $\rho > n$.}
Evaluating~\eqref{eq:normal} at any point $k_0$ where $s(k_0) = 1/\gamma$ gives
\[
  s'(k_0) = \bigl(r(k_0) - n\bigr) - \bigl(r(k_0) - \rho\bigr) = \rho - n > 0.
\]
Now suppose, for contradiction, that $s$ attains the value $1/\gamma$ somewhere on $(\bar{k}(\rho), k^*)$. Since $s(k^*) = 0 < 1/\gamma$, there exists a rightmost such point: let
\[
  k_0 := \sup\bigl\{ k \in (\bar{k}(\rho), k^*) : s(k) = 1/\gamma \bigr\}.
\]
Then $s(k_0) = 1/\gamma$ and $s(k) < 1/\gamma$ on $(k_0, k^*)$ because if not it would have to pass through $1/\gamma$ again from above, which is ruled out by the fact that $k_0$ is the supremum. Therefore, $s$ is non-increasing at $k_0$ from the right, giving $s'(k_0) \leq 0$. This contradicts $s'(k_0) = \rho - n > 0$.

\noindent\emph{Case $\rho = n$.}
The saving ODE becomes
\[
  ss' = \psi'(k)\Bigl(s - \frac{1}{\gamma}\Bigr),
\]
where $\psi'(k) := r(k) - n$. One can check directly that the constant function $\hat{s}(k) \equiv 1/\gamma$ satisfies this equation on $(\bar{k}(n), k^*)$: substituting gives $\tfrac{1}{\gamma}\cdot 0 = \psi'(k)\cdot 0$, which holds identically.\footnote{This comparison argument is available only in the auxiliary case: for $\rho>n$, $\hat s\equiv1/\gamma$ is not a solution, since substituting into the normal form \eqref{eq:normal} gives $s'=\rho-n\neq0$. That is why the case $\rho>n$ requires the separate crossing argument above.}
 
The right-hand side of the ODE written as $s' = \psi'(k)(s - 1/\gamma)/s$ is locally Lipschitz in $s$ for $s > 0$. By the uniqueness theorem 7.1.1 in \cite{Wiggins2003introduction}, no two distinct solutions can intersect. Since the saddle-path solution satisfies $s(k^*) = 0 \neq 1/\gamma = \hat{s}(k^*)$, it is a different solution from~$\hat{s}$ and therefore cannot touch $1/\gamma$ at any point. Combined with the lower bound and $s(k^*) = 0 < 1/\gamma$, this gives $s(k; n) \in (0, 1/\gamma)$ on $(\bar{k}(n), k^*)$.

\noindent
In both cases, $0 < s(k;\rho) < 1/\gamma$ on $(\bar{k}(\rho), k^*(\rho))$.

\emph{(iv)} Since $\det\mathbf{J}<0$, the eigenvalues satisfy $\lambda_s<0<\lambda_u$ with unstable eigenvector $\propto(1,\lambda_u)$, so the two branches of the local unstable manifold $W^u_{\mathrm{loc}}$ leave $(k^*(\rho),0)$ into $\{k>k^*,s>0\}$ and $\{k<k^*,s<0\}$.

Consider the branch $W^u_+$ and let $(k(t),s(t))$ be an orbit on it with $k(0)>k^*(\rho)$, $s(0)>0$. I claim $s(t)>0$ for all $t\ge0$. If not, let $t_0:=\inf\{t>0:s(t)=0\}$, finite by assumption. On $[0,t_0)$, $s>0$ so $\dotk=s>0$ and $k$ is increasing; by continuity $k(t_0)\ge k(0)>k^*(\rho)$, whence $r(k(t_0))<\rho$ and $\dot s(t_0)=-(r(k(t_0))-\rho)/\gamma>0$. But $s>0$ on $[0,t_0)$ with $s(t_0)=0$ forces $\dot s(t_0)\le0$, a contradiction. Hence $s(t)>0$ and $k(t)$ is strictly increasing for all $t\ge0$, so $k(t)\ge k(0)>k^*(\rho)$ and the orbit cannot converge to $(k^*(\rho),0)$.

The branch $W^u_-$ is symmetric: if $t_0:=\inf\{t>0:s(t)=0\}$ were finite, then $s<0$ and $\dotk<0$ on $[0,t_0)$, so $k(t_0)\le k(0)<k^*(\rho)$, whence $r(k(t_0))>\rho$ and $\dot s(t_0)=-(r(k(t_0))-\rho)/\gamma<0$; but $s<0$ before $t_0$ with $s(t_0)=0$ requires $\dot s(t_0)\ge0$. Hence $s(t)<0$ and $k(t)$ is strictly decreasing for all $t\ge0$, and the orbit leaves through $k=0$ without returning.

It remains to conclude that $W^s$ and $W^u$ meet only at $x^\ast(\rho)$. Suppose they shared a regular point $p\neq x^\ast(\rho)$. The orbit through $p$ then converges to $x^\ast(\rho)$ as $t\to+\infty$ (since $p\in W^s$) and also as $t\to-\infty$ (since $p\in W^u$); by uniqueness of solutions through the regular point $p$ (see theorem 7.1.1 in \cite{Wiggins2003introduction}) these are the same orbit, which is therefore homoclinic to $x^\ast(\rho)$. But such an orbit would have to leave $x^\ast(\rho)$ along a branch of $W^u$ and return to it, whereas the argument above shows that along each branch $k(t)$ is strictly monotone and the orbit escapes the interior\textemdash increasing to large $k$ along $W^u_+$, decreasing to the boundary $\{k=0\}$ along $W^u_-$\textemdash so no orbit returns to $x^\ast(\rho)$. Hence no homoclinic orbit exists. Finally, the manifolds cannot coincide along an arc through $x^\ast(\rho)$ itself, since near the equilibrium they are tangent to the distinct eigenspaces $E^s(\rho)\neq E^u(\rho)$ (distinct because $\lambda_s\neq\lambda_u$). Being one-dimensional and transverse at $x^\ast(\rho)$, and admitting no regular-point intersection, $W^s$ and $W^u$ meet only at $x^\ast(\rho)$.
\end{proof}

\begin{remark}
The absence of homoclinic orbits or other such irregularities means that one can start from a "seed" on the stable manifold around the steady state and construct the flow backward from there. The rest of the proof guarantees that this construction is arbitrarily close to the one where $\rho=n$ as $\rho\to n^+$. To do so, I will rely on the fact that the setup under consideration satisfies the assumptions for Theorem 7.1.1 (existence and uniqueness) as well as 7.3.1 (differentiable dependence on initial conditions and parameters) in \cite{Wiggins2003introduction}.
\end{remark}

\noindent\textbf{Eigenspaces.}
Following \cite{Wiggins2003introduction}, Section 3.1, let $E^{s}(\rho)$ and $E^{u}(\rho)$ denote the
(one-dimensional) stable and unstable subspaces of the vector field linearized around its hyperbolic steady state, \textit{i.e} the eigenspaces of the Jacobian of $F$, $DF(x^{\ast},\rho)$
associated with $\lambda_{s}$ and $\lambda_{u}$ respectively:
\[
  E^{s}(\rho) = \operatorname{span}\{e^{s}(\rho)\},
  \qquad
  E^{u}(\rho) = \operatorname{span}\{e^{u}(\rho)\},
\]
where $e^{s}(\rho)$ and $e^{u}(\rho)$ are eigenvectors corresponding to
$\lambda_{s}(\rho)$ and $\lambda_{u}(\rho)$.  These subspaces depend
smoothly on $\rho$. Note that since both eigenvalues have nonzero real
part, the centre subspace is trivial: $E^{c}(\rho) = \{0\}$.
 
\begin{lemma}[Smooth dependence on $\rho$]
  \label{lem:smooth}
  Let $\Omega := \{(k,\rho) :
  k \in (\bar{k}(\rho), k^{*}(\rho)), \rho\geq n\}$.
  The saving function $s(k;\rho)$, extended to $\rho = n$ is $C^{1}$ on the interior of $\Omega$.
\end{lemma}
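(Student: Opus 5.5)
The plan is to realize the saving function as the graph of the global stable manifold of the hyperbolic saddle $x^\ast(\rho)=(k^\ast(\rho),0)$. Smoothness of $s$ in $(k,\rho)$ then follows from smooth dependence of that manifold on $\rho$, combined with the implicit function theorem.

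First, I will handle the local picture. By Lemma \ref{lem:no-obstruction}(i) the point $x^\ast(\rho)$ is a hyperbolic saddle for every $\rho$ in a neighbourhood of $n$, including $\rho=n$. At $\rho=n$ the Jacobian has trace $0$ and determinant $f''(k^\ast(n))/\gamma<0$, so the eigenvalues are $\pm\sqrt{|f''|/\gamma}$. Hyperbolicity is an open condition, so it persists on an open interval of $\rho$ around $n$. Both $x^\ast(\rho)$ and the eigenvector $e^s(\rho)$ depend smoothly on $\rho$: $k^\ast$ is explicit, and the eigenvalues are simple roots of a smooth characteristic polynomial. I will then append the trivial equation $\dot\rho=0$ and apply (P1) to the extended system $(k,s,\rho)$. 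An alternative is the parametrized stable manifold theorem, which yields a local stable manifold $W^s_{\mathrm{loc}}(\rho)$ that is a $C^1$ graph over $E^s(\rho)$, jointly $C^1$ in the base coordinate and $\rho$. Since $e^s(\rho)\propto(1,\lambda_s)$ has nonzero $k$-component, this manifold is locally a graph $s=\sigma(k;\rho)$ over $k$ near $k^\ast(\rho)$, jointly $C^1$.

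Second, I will globalize. Fix an interior point $(k_0,\rho_0)$ of $\Omega$. Pick a seed point $p(\rho)=(k^\ast(\rho)-\eta,\sigma(k^\ast(\rho)-\eta;\rho))$ on the left branch of $W^s_{\mathrm{loc}}(\rho)$, which depends $C^1$ on $\rho$. By (P2) the saddle path through $k_0$ is the backward orbit $x(-t,p(\rho);\rho)$. By (P4) the map $(t,\rho)\mapsto x(-t,p(\rho);\rho)=(K(t,\rho),S(t,\rho))$ is jointly $C^1$. Lemma \ref{lem:no-obstruction}(iii) together with (P3) keeps the orbit inside the compact region $\{k\in[k_0-\epsilon,k^\ast],\ s\in[\text{small},1/\gamma]\}$ for the needed backward times, so it exists up to reaching $k_0$. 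Lemma \ref{lem:no-obstruction}(iv) excludes the orbit returning to the equilibrium, which would break the graph description.

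Third, I will invert time. Along the orbit $\partial_t K(t,\rho)=-S<0$, and Lemma \ref{lem:kdot_pos} gives $s>0$ on the arc for $\rho>n$. At $\rho=n$, Lemma \ref{lem:no-obstruction}(iii) gives $s>0$. The implicit function theorem applied to $K(t,\rho)=k$ therefore gives a $C^1$ function $t=\tau(k,\rho)$ near $(k_0,\rho_0)$. Setting $s(k;\rho)=S(\tau(k,\rho),\rho)$ yields a jointly $C^1$ function, and by uniqueness (Theorem 7.1.1) it coincides with the optimal saving function of Lemma \ref{lem:ode}.

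The main obstacle is at $\rho=n$ and near the endpoints. For $\rho=n$ the saddle path is not a priori the optimum, since the objective diverges, so the extension must be defined as the stable-manifold branch. I would check that this branch coincides with the explicit solution of Proposition \ref{prop:closed_form} by uniqueness of the solution of \eqref{eq:ODE_rho_n} with $s(k^\ast(n))=0$ on the left branch. The second difficulty is uniform positivity of $s$, which is needed for the implicit function step. Staying in the interior of $\Omega$ keeps $k$ bounded away from $k^\ast(\rho)$, and on such compacts $s$ is bounded below by continuity. Only the open-set statement is claimed, so this suffices.
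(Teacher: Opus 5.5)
Your proposal is correct and follows essentially the paper's route. You obtain a jointly smooth local stable manifold by appending $\dot\rho=0$: since the augmented equilibria have a zero eigenvalue, (P1) alone does not apply there, and your fallback to the parametrized stable manifold theorem does the job the paper assigns to the centre-stable manifold. You then take an anchor on its left branch, use smooth dependence of the backward flow on $(t,\rho)$, and invert $K(t;\rho)=k$ using $\partial K/\partial t=-S<0$.

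The differences from the paper are minor. You invert locally with the implicit function theorem instead of building the global diffeomorphism $\Gamma$ on $[0,T]\times[n,n+\epsilon_0]$. You also explicitly identify the $\rho=n$ stable branch with the closed form of Proposition~\ref{prop:closed_form}, which the paper takes for granted.
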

 
\begin{proof}
The proof consists in five steps.
 
\noindent\textbf{Step 1.  Local stable manifold in translated
coordinates.}
Since the fixed point $x^{\ast}(\rho)$ moves with $\rho$, I rely on the following change of variables
\begin{equation}\label{eq:translate}
  y := x - x^{\ast}(\rho)
  = \begin{pmatrix} k - k^{*}(\rho) \\ s \end{pmatrix},
\end{equation}
so that the fixed point sits at the origin $y = 0$ for every $\rho$. Accordingly, define the translated vector field
\[
  G(y;\rho) := F\bigl(y + x^{\ast}(\rho); \rho\bigr).
\]
Then $G(0;\rho) = 0$ for all $\rho$, and
$D_{y}G(0;\rho) = \mathbf{J}$,
which has a hyperbolic saddle at the origin with eigenvalues
$\lambda_{s}(\rho) < 0 < \lambda_{u}(\rho)$.
Since $F \in C^{\infty}$ and $x^{\ast} \in C^{\infty}$, the composition
$G$ is $C^{p}$ jointly in $(y,\rho)$ for every $p \geq 1$. Now define $z := (y,\rho) \in \mathbb{R}^{3}$ and the augmented
autonomous system
\begin{equation}\label{eq:augmented}
  \dot{z} =
  \begin{pmatrix} G(y;\rho) \\ 0 \end{pmatrix}
\end{equation}
The fixed-point set of this vector field is the line
$\mathcal{L} := \{(0_{2\times 1},\rho) : \rho \in \mathcal{V}\}$, where
$\mathcal{V}$ is an open interval containing $n$. For each
fixed $\rho$, the linearization of this vector field at $(0,\rho)$ has eigenvalues
$\lambda_{s}(\rho)$, $\lambda_{u}(\rho)$, and $0$.  In the notation of
\cite{Wiggins2003introduction} (Section 3.1), the stable, unstable, and centre subspaces of
the augmented system are then:
\[
  \widetilde{E}^{s}(\rho) = \operatorname{span}\left\{(e^s(\rho),0)\right\},
  \qquad
  \widetilde{E}^{u}(\rho) = \operatorname{span}\left\{(e^u(\rho),0)\right\},
  \qquad
  \widetilde{E}^{c}(\rho) = \operatorname{span}\left\{(0,0,1)\right\},
\]
where $e^s(\rho),e^u(\rho)\in\mathbb{R}^2$ are the stable and unstable eigenvectors.
 
I now apply \textbf{Theorem~3.2.1} in \cite{Wiggins2003introduction} to the augmented system
\eqref{eq:augmented} at the fixed point $(0,\rho_{0})$ for any $\rho_{0}$
near $n$.  The theorem guarantees the existence of a $C^{p}$
one-dimensional local stable manifold $W^{s}_{\mathrm{loc}}(0;\rho_{0})$,
a $C^{p}$ one-dimensional local unstable manifold
$W^{u}_{\mathrm{loc}}(0;\rho_{0})$, and a $C^{p}$ one-dimensional local
centre manifold $W^{c}_{\mathrm{loc}}(0;\rho_{0})$ in $\mathbb{R}^{3}$, all
tangent to the respective subspaces at the origin.
 
What is needed in the current context is the \emph{joint} smooth dependence of $W^{s}_{\mathrm{loc}}$ on the parameter $\rho$ viewed as a coordinate in
the augmented space.  This is provided by the centre-stable manifold
structure.  In the augmented system, the centre-stable subspace is
$\widetilde{E}^{cs} = \widetilde{E}^{c} \oplus \widetilde{E}^{s}$, which
is two-dimensional.  By the same Theorem~3.2.1 applied to the splitting
$\mathbb{R}^{3} = \widetilde{E}^{cs} \oplus \widetilde{E}^{u}$ (treating the
unstable direction as the complementary subspace), the local
centre-stable manifold $W^{cs}_{\mathrm{loc}}$ is a $C^{p}$
two-dimensional surface in $\mathbb{R}^{3}$ tangent to
$\widetilde{E}^{cs}$ at $(0,\rho_{0})$.  Its intersection with each
constant-$\rho$ slice is precisely the local stable manifold at that
parameter value.  The $C^{p}$ regularity of $W^{cs}_{\mathrm{loc}}$ in the
augmented coordinates $(y_{1}, y_{2}, \rho)$ therefore yields a $C^{p}$
function
$\varphi\colon (-\eta,\eta) \times \mathcal{V}' \to \mathbb{R}$
(with $\eta > 0$ and $\mathcal{V}'$ an open neighborhood of $n$) that links $y_2$ (savings) to both $y_1$ (capital, in deviation from steady state) as well as the discount factor $\rho$:
\begin{equation}\label{eq:localmanifold}
  \bigl\{(y_{1},\varphi(y_{1};\rho))
    : y_{1} \in (-\eta,\eta)\bigr\}.
\end{equation}
Note that, by considering an open neighborhood of $n$ one implicitly considers values such that $\rho<n$. These are only considered to apply standard Theorems in the dynamical systems literature. Later on, the main result will be restricted to the domain where $\rho<n$. Translating back to the original coordinates via~\eqref{eq:translate},
the local saddle path is the graph
\begin{equation}\label{eq:sigmaloc}
  s = \sigma_{\mathrm{loc}}(k;\rho)
    := \varphi\bigl(k - k^{*}(\rho);\rho\bigr),
\end{equation}
defined for $|k - k^{*}(\rho)| < \eta$.  Since $k^{*}$ is $C^{\infty}$
in $\rho$ and $\varphi$ is $C^{p}$ in its arguments,
$\sigma_{\mathrm{loc}}$ is $C^{p}$ jointly in $(k,\rho)$.
 
\noindent\textbf{Step~2.  Anchor point on the local manifold.}
Fix a small constant $\eta_{0} \in (0,\eta)$ and define the anchor
point
\[
  x_{0}(\rho) :=
  \bigl(k^{*}(\rho) - \eta_{0},
  \sigma_{\mathrm{loc}}(k^{*}(\rho) - \eta_{0};\rho)\bigr).
\]
By construction $x_{0}(\rho) \in W^{s}_{\mathrm{loc}}(0;\rho)$ for
every $\rho \in \mathcal{V}'$, and $\rho \mapsto x_{0}(\rho)$ is
$C^{1}$ on $\mathcal{V}'$.  At $\rho = n$, the second component equals
$s_{0}(k^{*}(n) - \eta_{0}) > 0$ (the closed-form saddle path is
strictly positive below $k^{*}$); by continuity, the second component
is bounded below by some $\underline{s} > 0$ uniformly on a
(possibly smaller) neighborhood of $n$.
  
\noindent\textbf{Step 3.  Backward flow from the anchor.}
As before, let $x(t,x_0;\rho)$ denote the flow of \eqref{eq:vector_field},
i.e.\ the solution at time~$t$ starting from $x_{0}$ at $t = 0$
under parameter~$\rho$. From Theorem 7.3.1 in \cite{Wiggins2003introduction},
$x(t,x_0;\rho)$ is $C^{p}$ jointly in $(t, x_{0}, \rho)$ on its maximal
domain of definition (with continuation guaranteed by
Theorem 7.2.1). Consider the backward orbit starting from the anchor:
\begin{equation}\label{eq:backward}
  \beta(t;\rho) := x(-t,x_0;\rho),
  \qquad t \geq 0.
\end{equation}
Since $x(t,x_0;\rho)$ is $C^{p}$ in $(t,x_{0},\rho)$ and
$x_{0}$ is $C^{1}$ in~$\rho$, the composition $\beta$ is $C^{1}$
jointly in $(t,\rho)$. Now write $\beta(t;\rho) \defeq \bigl(K(t;\rho), S(t;\rho)\bigr)$, the components of the \emph{backward} orbit, so that $\partial K/\partial t=-S$, the minus sign coming from the time reversal and not from any capital-letter convention.
Along the saddle path the $s$-component is strictly positive
(away from the equilibrium), so $\dot{k} = s > 0$ in forward time;
equivalently $\partial K / \partial t = -S < 0$ in the backward
direction.  Hence $K(t;\rho)$ is strictly decreasing in~$t$,
with $K(0;\rho) = k^{*}(\rho) - \eta_{0}$.
 
\noindent\textbf{Step 4. Uniform backward time to cover
$[k_{1},k_{2}]$.}
Let $[k_{1},k_{2}]$ be a closed sub-interval of
$(\bar{k}(n),k^{*}(n))$.  Since $k_{2} < k^{*}(n)$ strictly and
$k^{*}(\rho) \to k^{*}(n)$ as $\rho \to n$, one may shrink
$\mathcal{V}'$ to ensure $k_{2} < k^{*}(\rho) - \eta_{0}$ for all
$\rho \in \mathcal{V}'$: the anchor lies strictly to the right of
$[k_{1},k_{2}]$ uniformly. Define the hitting time
\[
  T_{1}(\rho) := \inf\bigl\{t > 0 : K(t;\rho) = k_{1}\bigr\},
\]
which is finite because $K(t;\rho)$ is strictly decreasing.
Since $\partial K/\partial t = -S \leq -\underline{s} < 0$ (remember that $s$ is bounded below by some $\underline{s} > 0$) on any
compact time interval, the implicit function theorem applies and yields
$T_{1} \in C^{1}(\mathcal{V}')$.  Set
$T := \sup_{\rho \in \mathcal{V}'} T_{1}(\rho) < \infty$ (finite by
continuity after possibly shrinking $\mathcal{V}'$). On the compact set $[0,T] \times [n,n+\epsilon_0]$, the map $\beta$ is uniformly $C^{1}$. Remember that time is flowing backwards here.
 
\noindent\textbf{Step 5. Global inversion to extract $s(k;\rho)$.}
As in step 4, fix the compact parameter interval $[n, n+\epsilon_{0}] \subset \mathcal{V}'$ from Step~4. On $[0,T] \times [n, n+\epsilon_{0}]$ define the map
\[
  \Gamma \colon [0,T] \times [n, n+\epsilon_{0}] \to \mathbb{R} \times [n, n+\epsilon_{0}],
  \qquad
  \Gamma(t,\rho) := \bigl(K(t;\rho),\rho\bigr),
\]
which is $C^{1}$ by Step~3, with Jacobian
\[
  D\Gamma =
  \begin{pmatrix}
    \partial K/\partial t & \partial K/\partial\rho \\
    0 & 1
  \end{pmatrix},
  \qquad
  \det D\Gamma = \frac{\partial K}{\partial t} = -S \le -\underline{s} < 0 ,
\]
the bound $-S \le -\underline{s}$ holding uniformly on $[0,T]\times[n, n+\epsilon_{0}]$ because, by Step~2, $s$ is bounded below by a constant $\underline{s}>0$ on the backward orbit uniformly for $\rho$ near $n$. The fact that $[n, n+\epsilon_{0}]$ is compact implies that $\underline{s}>0$ is a single constant valid for all $\rho\in[n, n+\epsilon_{0}]$.

It will be useful to show that $\Gamma$ is a $C^{1}$ diffeomorphism onto its image. \emph{Injectivity.} $\Gamma$ preserves the second coordinate, so it suffices to show that $t\mapsto K(t;\rho)$ is injective for each fixed $\rho$. Since $\partial K/\partial t = -S \le -\underline{s} < 0$ on all of $[0,T]$, the map $t\mapsto K(t;\rho)$ is strictly decreasing, hence injective; consequently $\Gamma$ is injective on $[0,T]\times[n, n+\epsilon_{0}]$. \emph{Local regularity.} Since $\det D\Gamma \ne 0$ everywhere, the inverse function theorem makes $\Gamma$ a local $C^{1}$ diffeomorphism at each point. A $C^{1}$ map that is both globally injective and everywhere a local diffeomorphism is a diffeomorphism onto its (open) image, with $C^{1}$ inverse
\[
  \Gamma^{-1}(k,\rho) = \bigl(t(k;\rho),\rho\bigr).
\]
Equivalently and more concretely, for each $\rho$ the strictly monotone $C^{1}$ function $K(\cdot;\rho)$ has a $C^{1}$ inverse $t(\,\cdot\,;\rho)$ on its range by the one-dimensional inverse function theorem, and $t$ depends $C^{1}$ on $\rho$ because $\Gamma$ is jointly $C^{1}$ with $\partial K/\partial t$ bounded away from zero.

\emph{Domain.} By Step~4 the anchor satisfies $K(0;\rho)=k^{*}(\rho)-\eta_{0} > k_{2}$ and the hitting time $T_{1}(\rho)$ with $K(T_{1}(\rho);\rho)=k_{1}$ is bounded by $T$, uniformly in $\rho\in[n, n+\epsilon_{0}]$; hence for every $\rho\in[n, n+\epsilon_{0}]$ the range of $K(\cdot;\rho)$ over $[0,T]$ contains $[k_{1},k_{2}]$, so $t(k;\rho)$ is defined for all $(k,\rho)\in[k_{1},k_{2}]\times[n, n+\epsilon_{0}]$, an open set containing $[k_{1},k_{2}]\times\{n\}$. Finally define
\begin{equation}\label{eq:sfinal}
  s(k;\rho) := S\bigl(t(k;\rho);\rho\bigr).
\end{equation}
As a composition of $C^{1}$ maps, $s$ is $C^{1}$ jointly in $(k,\rho)$. By construction the point $\bigl(k,s(k;\rho)\bigr)$ lies on the saddle path for the parameter $\rho$. Since $[k_{1},k_{2}]$ was an arbitrary compact sub-interval of $(\bar{k}(n),k^{*}(n))$, the saving function is $C^{1}$ on the interior of~$\Omega$. 
Throughout the construction, the compact region
$\mathcal{K} := [k_{1}, k^{*}(n) - \eta_{0}/2] \times
[\underline{s}, 1/\gamma]$ in which the backward orbit lives
contains no fixed point of~\eqref{eq:vector_field} other than
$x^{\ast}(\rho)$, which lies outside $\mathcal{K}$ to the right. Indeed, for $\rho$ close enough to $n$ (or $\epsilon_0$ being small enough), this guarantees that the upper bound for $k$ doesn't depend on $\rho$ and is such that $k^{*}(n) - \eta_{0}/2<k^{*}(\rho)$.

By Lemma \ref{lem:no-obstruction}, the vector field is $C^{\infty}$ on~$\mathcal{K}$,
the orbit is bounded away from the singular boundary $\{k = 0\}$,
and there is no homoclinic connection.  The compactness and
non-degeneracy of $\mathcal{K}$ justify applying
Theorem 7.3.1 in \cite{Wiggins2003introduction} on a uniform time interval $[0,T]$.
\end{proof}
 
Before stating the main theorem, the following Lemma will be useful.

\begin{lemma}[Subsistence threshold] \label{lem:threshold}
Let $c_0(k):=f(k) - (\delta+n)k-s(k;n)$ be the unconstrained consumption function. There exists a unique $\bar{k}(n) \in (\kmin, k^{*}(n))$ with $c_{0}(\bar{k}(n)) = \bar{c}$; moreover $c_{0}(k) < \bar{c}$ for $k \in [\kmin,\bar{k}(n))$ and $c_{0}(k) > \bar{c}$ for $k \in (\bar{k}(n), k^{*}(n))$.
\end{lemma}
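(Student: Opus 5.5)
The plan is to work directly with the closed form of Proposition \ref{prop:closed_form}, now read as a function on all of $[\kmin,k^\ast(n)]$ rather than only on $(\kbar(n),k^\ast(n))$. Then $c_0(k)=\psi(k)-s(k;n)$ is a continuous function, and I will show that it is strictly increasing, lies below $\cbar$ at $\kmin$, and lies above $\cbar$ at $k^\ast(n)$. The intermediate value theorem together with strict monotonicity then gives existence and uniqueness of $\kbar(n)$ and both sign statements at once.

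\emph{Step 1 (the closed form is well defined and interior on $[\kmin,k^\ast(n))$).} Since $\rho=n$, the steady state satisfies $k^\ast(n)=(\alpha/(\delta+n))^{1/(1-\alpha)}=k_\psi$. By Lemma \ref{lem:kmin}\emph{(i)}, $\psi$ attains its unique maximum there. Hence the Lambert argument
\[
\nu\, e^{\gamma\psi(k)}=-e^{-1-\gamma(\psi(k^\ast(n))-\psi(k))}
\]
lies in $[-1/e,0)$ for every $k>0$, and it equals $-1/e$ only at $k=k^\ast(n)$. So $w_0(k)\in(-1,0)$ for $k\neq k^\ast(n)$ and $w_0(k^\ast(n))=-1$. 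It follows that $q(k)=\gamma s(k;n)=1+w_0(k)\in(0,1)$ on $[\kmin,k^\ast(n))$, and $q\to 0$ as $k\uparrow k^\ast(n)$. Continuity of $W_0$ on $[-1/e,\infty)$ makes $c_0$ continuous on $[\kmin,k^\ast(n)]$. Its smoothness on $[\kmin,k^\ast(n))$ follows from analyticity of $W_0$ away from the branch point.

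\emph{Step 2 (monotonicity).} The computation in the proof of Corollary \ref{cor:properties}\emph{(ii)} only uses the implicit relation $q+\ln(1-q)=\gamma(\psi-\psi(k^\ast))$ and $q\in(0,1)$. It therefore carries over verbatim to the larger interval $[\kmin,k^\ast(n))$ and gives
\[
c_0'(k)=\frac{\psi'(k)}{q(k)}.
\]
Since $k<k^\ast(n)=k_\psi$, we have $\psi'(k)>0$ by Lemma \ref{lem:kmin}\emph{(i)}, and $q(k)>0$ by Step 1. Hence $c_0'>0$ and $c_0$ is strictly increasing on $[\kmin,k^\ast(n))$, with continuity up to $k^\ast(n)$.

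\emph{Step 3 (endpoint values).} At $\kmin$, Lemma \ref{lem:kmin}\emph{(iii)} gives $\psi(\kmin)=\cbar$. Combined with $s(\kmin;n)>0$ from Step 1, this yields $c_0(\kmin)<\cbar$. At $k^\ast(n)$, we have $c_0(k^\ast(n))=\psi(k^\ast(n))=\max\psi$. This maximum is at least $\psi(k^\ast(\rho))=c^\ast$, and $c^\ast>\cbar$ by Assumption \ref{ass:k_min_c_bar}; hence $c_0(k^\ast(n))>\cbar$. Continuity and strict increase then give a unique crossing $\kbar(n)\in(\kmin,k^\ast(n))$, with the stated strict inequalities on either side.

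I expect the only delicate point to be Step 1: justifying that the unconstrained saddle path for $\rho=n$ is indeed given by the $W_0$ expression on the whole interval down to $\kmin$, and not merely on $(\kbar(n),k^\ast(n))$ as stated in Proposition \ref{prop:closed_form}. I would handle this by defining $c_0$ through the explicit formula, which is what the lemma requires. I would then note that, by uniqueness for the separable ODE \eqref{eq:ODE_rho_n} with $s>0$ (the Lipschitz argument used in Lemma \ref{lem:no-obstruction}\emph{(iii)}), this formula is the only continuation of the saddle path.
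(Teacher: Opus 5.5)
Your proposal is correct and follows essentially the same route as the paper: strict monotonicity of $c_0$ via $c_0'=\psi'/(\gamma s(k;n))$, the endpoint comparisons $c_0(\kmin)=\cbar-s(\kmin;n)<\cbar$ and $c_0(k^\ast(n))>\cbar$, and the intermediate value theorem. Two of your steps are slightly more careful than the paper. Your Step 1 gets $q\in(0,1)$ on all of $[\kmin,k^\ast(n))$ directly from the $W_0$ formula, using $k^\ast(n)=k_\psi$; the paper instead obtains $s(\kmin;n)>0$ by citing Lemma \ref{lem:no-obstruction}\emph{(iii)}, which is stated only on $(\kbar(\rho),k^\ast(\rho))$. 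Your right-endpoint argument via $\max\psi\ge c^\ast$ also fixes a looseness: the paper writes $c_0(k^\ast(n))=c^\ast$ without saying at which $\rho$ the steady state $c^\ast$ is evaluated.
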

\begin{proof}
On $(\kmin, k^{\ast}(n))$ the closed form of Proposition \ref{prop:closed_form} gives $s(k;n)$ explicitly, so $c_0=\psi-s(\cdot\,;n)$ is well defined and continuously differentiable there. Differentiating and using the saddle-path ODE \eqref{eq:ODE_rho_n} at $\rho=n$ to substitute for $s'(k;n)$,
\[
c_0'(k)=\psi'(k)-s'(k;n)
=\psi'(k)-\frac{\psi'(k)}{s(k;n)}\left[s(k;n)-\frac1\gamma\right]
=\frac{\psi'(k)}{\gamma\, s(k;n)}
=\frac{r(k)-n}{\gamma\, s(k;n)},
\]
which is strictly positive because $\psi'(k)=r(k)-n>0$ for $k<k^{\ast}(n)$ and $s(k;n)\in(0,1/\gamma)$ by Lemma \ref{lem:no-obstruction}\emph{(iii)}. Hence $c_0$ is continuous and strictly increasing on $(\kmin,k^{\ast}(n))$.

At the left endpoint, $\psi(\kmin)=\cbar$ by definition of $\kmin$, so $c_0(\kmin)=\psi(\kmin)-s(\kmin;n)=\cbar-s(\kmin;n)<\cbar$ strictly, since $s>0$ on the saddle path (Lemma \ref{lem:no-obstruction}(iii)). At the right endpoint $c_0(k^\ast(n))=c^\ast>\cbar$ by Assumption \ref{ass:k_min_c_bar}. The intermediate value theorem gives a unique $\bar k(n)\in(\kmin,k^\ast(n))$ with $c_0(\bar k(n))=\cbar$, and strict monotonicity gives the sign pattern; in particular $\bar k(n)>\kmin$ is a conclusion, not an assumption.
\end{proof}

\subsection{Main Theorem Statement}

\setcounter{theorem}{0}
\begin{theorem}[Uniform convergence on compacta] \label{thm:main}
Let $[k_{1}, k_{2}]$ be a closed interval with $0 < k_{1} < k_{2} < k^{*}(n)$ and
$\bar{k}(n) \notin [k_{1}, k_{2}]$. There exist $\epsilon > 0$ and a constant $C
\geq 0$ (depending on $[k_{1}, k_{2}]$ and on the parameters $\gamma, \alpha, \delta, n$)
such that, for every $\rho$ satisfying $0 < \rho - n < \epsilon$,
\begin{equation} \label{eq:bound}
\sup_{k \in [k_{1}, k_{2}]} \bigl| c(k; \rho) - c(k;n) \bigr| \leq
C (\rho - n),
\end{equation}
and an analogous bound holds for $s(k;\rho) - s(k;n)$.
\end{theorem}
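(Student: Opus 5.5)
The plan is to derive the $O(\rho-n)$ bound from the joint $C^1$ regularity in Lemma \ref{lem:smooth}, combined with the mean value theorem in $\rho$ on a compact parameter–state rectangle. Since $c(k;\rho)=\psi(k)-s(k;\rho)$ and $\psi$ does not depend on $\rho$, we have $|c(k;\rho)-c(k;n)|=|s(k;\rho)-s(k;n)|$. So it suffices to prove the bound for the saving function, and the consumption bound follows with the same constant. Because $\bar k(n)\notin[k_1,k_2]$ and the interval is connected, there are two cases: $[k_1,k_2]\subset(\bar k(n),k^\ast(n))$ (interior regime) or $k_2<\bar k(n)$ (subsistence regime). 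Any part of the interval below $\kmin$ is infeasible, so we restrict attention to $[k_1,k_2]\cap[\kmin,k^\ast(n))$, as in Theorem \ref{thm:smooth_rho}.

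\emph{Step 1: locating the threshold.} We first show that $\rho\mapsto\bar k(\rho)$ is continuous at $n^+$. Define $H(k,\rho)\defeq\psi(k)-s(k;\rho)-\cbar$ near $(\bar k(n),n)$. By Lemma \ref{lem:threshold}, $H(\bar k(n),n)=0$ and $\partial_k H(\bar k(n),n)=c_0'(\bar k(n))=\psi'(\bar k(n))/(\gamma s(\bar k(n);n))>0$. The construction in Lemma \ref{lem:smooth} (anchor, backward flow, inversion) only needs $k_1>\kmin$ and $s$ bounded below along the orbit. We therefore run it on a compact interval $[\bar k(n)-\eta_1,\bar k(n)+\eta_1]\subset(\kmin,k^\ast(n))$ and obtain that $H$ is $C^1$ there. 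The implicit function theorem then gives a $C^1$ branch $\bar k(\rho)$ with $\bar k(\rho)\to\bar k(n)$. Consequently, for $\rho-n$ small, $[k_1,k_2]$ stays on the same side of $\bar k(\rho)$ as it is of $\bar k(n)$. Shrinking $\epsilon$ also ensures $k_2<k^\ast(\rho)$, since $k^\ast$ is decreasing in $\rho$ and we need $k_2<k^\ast(n)-\eta_0$ in the anchor construction.

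\emph{Step 2: subsistence case.} If $k_2<\bar k(n)$, then for small $\rho-n$ we have $k_2<\bar k(\rho)$. Both policies then equal $c=\cbar$, so $s=\psi-\cbar$ for both, and the difference vanishes identically. Any $C\ge0$ works here, for instance $C=0$.

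\emph{Step 3: interior case.} If $[k_1,k_2]\subset(\bar k(n),k^\ast(n))$, Step 1 gives $[k_1,k_2]\subset(\bar k(\rho),k^\ast(\rho))$ for $\rho\in[n,n+\epsilon_0]$. By Lemma \ref{lem:smooth} (Step 5 of its proof), $s$ is $C^1$ jointly on the compact rectangle $[k_1,k_2]\times[n,n+\epsilon_0]$. We then set $C\defeq\max_{[k_1,k_2]\times[n,n+\epsilon_0]}|\partial_\rho s(k;\rho)|$, which is finite by continuity on a compact set. For each fixed $k$, the mean value theorem in $\rho$ gives $|s(k;\rho)-s(k;n)|\le C(\rho-n)$, uniformly in $k$. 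Taking $\epsilon=\epsilon_0$, after the shrinkings above, finishes the argument.

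\emph{Main obstacle.} The delicate point is that the rectangle includes the endpoint $\rho=n$. At $\rho=n$ the transversality condition fails, so $s(\cdot;n)$ has to be identified with the $\rho=n$ slice of the same stable-manifold construction. Concretely, the closed form of Proposition \ref{prop:closed_form} must coincide with the stable branch obtained at $\rho=n$, which holds because the Jacobian in Lemma \ref{lem:no-obstruction}(i) is still a hyperbolic saddle at $\rho=n$ and the graph through $(k^\ast(n),0)$ is unique. In addition, the lower bound $\underline s>0$ and the finite backward time $T$ must hold uniformly over the closed parameter interval. I would handle this by working on an open neighbourhood $\mathcal V'$ of $n$, as in Step 1 of Lemma \ref{lem:smooth}, and then restricting to $[n,n+\epsilon_0]\subset\mathcal V'$, so that joint continuity of $\partial_\rho s$ up to $\rho=n$ holds.
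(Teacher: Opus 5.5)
Your proposal is correct and follows essentially the same route as the paper's proof. It obtains continuity of $\bar k(\rho)$ from the implicit function theorem using $c'(\bar k(n))>0$, then splits into the subsistence case (difference identically zero, $C=0$) and the interior case (joint $C^1$ regularity from Lemma \ref{lem:smooth} on the compact rectangle plus the mean value theorem in $\rho$), and transfers the bound to consumption via $c=\psi-s$. Two points are handled more carefully than in the paper: you run the backward-flow construction on an interval straddling $\bar k(n)$, so the implicit function theorem is applied to a function that is genuinely $C^1$ near $(\bar k(n),n)$, and you explicitly identify the closed form with the $\rho=n$ slice of the stable manifold; the paper only asserts both.
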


\begin{proof}
It has been established in section \ref{sec:model} that consumption is given by $\cbar$ if $k\leq \bar{k}(\rho)$ and that optimal consumption follows the Euler equation if $k>\bar{k}(\rho)$. The threshold $\bar{k}(\rho)$ is thus defined by $c(\bar{k}(\rho); \rho) = \cbar$. By construction, optimal consumption is
jointly continuous in $(k, \rho)$ on the interior. Given that the Euler equation holds for $k\in(\bar{k}(\rho),k^\ast(\rho))$, one can write
\begin{align*}
c'(k) = \frac{\dotc}{\dotk} = \frac{r(k)-\rho}{\gamma s(k)}>0    
\end{align*}
so that consumption is strictly increasing in capital. In particular, one has $c'(\bar{k}(n)) > 0$ so that the implicit
function theorem applies and yields $\bar{k}(\rho) \in C^{1}$ near $\rho = n$. In
particular, $\bar{k}(\rho) \to \bar{k}(n)$ as $\rho \to n^+$. Given the maintained hypothesis that
$\bar{k}(n) \notin [k_{1}, k_{2}]$, exactly one of the following two cases obtains.

\emph{Case 1: $[k_{1}, k_{2}] \subset [\kmin, \bar{k}(n))$.} Set $\eta := \bar{k}(n) -
k_{2} > 0$. By continuity of $\bar{k}(\rho)$, there exists $\epsilon > 0$ such that
$\bar{k}(\rho) > \bar{k}(n) - \eta / 2 > k_{2}$ for all $\rho \in (n, n +
\epsilon)$. For such $\rho$, both the unperturbed and perturbed problems are in
the constrained regime throughout $[k_{1}, k_{2}]$: $c(k; \rho) = \bar{c}$
for all $k \in [k_{1}, k_{2}]$. Hence \eqref{eq:bound} holds with $C = 0$.

\medskip
\emph{Case 2: $[k_{1}, k_{2}] \subset (\bar{k}(n), k^{*}(n))$.} Set $\eta_{1} := k_{1}
- \bar{k}(n) > 0$ and $\eta_{2} := k^{*}(n) - k_{2} > 0$. By continuity of $\bar{k}(\rho)$
and $k^{*}(\rho)$, there exists $\epsilon > 0$ such that for all $\rho \in [n, n +
\epsilon]$,
\begin{equation*}
\bar{k}(\rho) < k_{1} - \eta_{1}/2 \quad \text{and} \quad k^{*}(\rho) > k_{2} +
\eta_{2}/2.
\end{equation*}
For such $\rho$, the compact $[k_{1}, k_{2}]$ lies in the unconstrained regime of both
problems, and $c(k;\rho) > \bar{c}$ on $[k_{1}, k_{2}]$. By
Lemma \ref{lem:smooth}, the map $(k, \rho) \mapsto s(k; \rho)$ is $C^{1}$ on the compact
rectangle $[k_{1}, k_{2}] \times [n, n + \epsilon]$. Its partial derivative
$\partial s/\partial \rho$ is therefore continuous on this compact set and attains its
supremum
\begin{equation*}
M := \sup_{(k, \rho) \in [k_{1}, k_{2}] \times [n, n + \epsilon]} \bigl|
\partial s(k; \rho)/\partial \rho \bigr| < \infty.
\end{equation*}
Let $s_{\rho}(k;\rho)\defeq \p s(k;\rho)/\p \rho$. Then, by the mean value theorem, for each fixed $k \in [k_{1}, k_{2}]$ and each $\rho \in (n,
n + \epsilon)$, there exists $\rho' \in (n, \rho)$ such that
\begin{equation*}
s(k; \rho) - s(k; n) = (\rho - n)
s_\rho(k; \rho').
\end{equation*}
Taking absolute values, $|s(k; \rho) - s(k;n)| \leq M (\rho - n)$. Taking the supremum
over $k \in [k_{1}, k_{2}]$,
\begin{equation*}
\sup_{k \in [k_{1}, k_{2}]} |s(k; \rho) - s(k;n)| \leq M (\rho - n).
\end{equation*}
Since $c(k; \rho) = f(k) - (\delta+n)k - s(k; \rho)$ and $c(k;n) = \Psi_{n}(k) - s(k;n)$,
and $\Psi$ is independent of $\rho$, subtracting gives $c(k; \rho) - c(k;n) =
-(s(k; \rho) - s(k;n))$, hence the same bound applies to $c$. Setting $C=\max\left\{0,M\right\}$ completes
the proof.
\end{proof}

\section{Proof of Proposition \ref{prop:threshold_gamma}}
\label{sec:app_proof_threshold_gamma}

\noindent\textbf{Step 1: speed at the steady state.} Since $s(k^\ast)=0$, L'H\^opital's rule gives $\beta(k^\ast;\gamma)=-s'(k^\ast)$. Linearising $\dotk=s(k)$ around $k^\ast$ gives $s'(k^\ast)=\lambda_s$, and with $\rho=n$ the characteristic equation $\lambda^2-(\rho-n)\lambda+f''(k^\ast)/\gamma=0$ reduces to $\lambda=\pm\sqrt{|f''(k^\ast)|/\gamma}$. Hence $\beta(k^\ast;\gamma)=\sqrt{|f''(k^\ast)|/\gamma}$.

\noindent\textbf{Step 2: reduction to a scalar problem.} Write $q:=\gamma s$ and $\Delta:=\psi(k^\ast(n))-\psi(k_0)>0$. Equation \eqref{eq:gamma_sk_Lambert} gives $G(q)=\gamma\Delta$ where $G(q):=-(q+\ln(1-q))$, a strictly increasing bijection $(0,1)\to(0,\infty)$ with $G'(q)=q/(1-q)$. So $q(\gamma)=G^{-1}(\gamma\Delta)$ is strictly increasing and
\[
R(\gamma)=\frac{s(k_0;n)}{k^\ast(n)-k_0}\sqrt{\frac{\gamma}{|f''(k^\ast(n))|}}=\frac{g(\gamma)}{(k^\ast(n)-k_0)\sqrt{|f''(k^\ast(n))|}},\qquad g(\gamma):=\frac{q(\gamma)}{\sqrt\gamma}.
\]

\noindent\textbf{Step 3: $\psi$ strictly decreasing.} From $G(q)=\gamma\Delta$, $q'(\gamma)=\Delta(1-q)/q$, so
\[
g'(\gamma)=\gamma^{-1/2}q'-\tfrac12\gamma^{-3/2}q=\frac{\gamma^{-3/2}}{2q}\bigl[2\gamma\Delta(1-q)-q^2\bigr]=\frac{\gamma^{-3/2}}{2q}\,H(q),
\]
where $H(q):=2G(q)(1-q)-q^2$ (using $\gamma\Delta=G(q)$). Now $H(0)=0$ and $H'(q)=2q+2\ln(1-q)<0$ on $(0,1)$ since $\ln(1-q)<-q$. Hence $H<0$ and $g'<0$: $R$ is strictly decreasing.

\noindent\textbf{Step 4: limits.} As $\gamma\downarrow0$, $q\to0$ and $G(q)=q^2/2+O(q^3)$, so $q\sim\sqrt{2\gamma\Delta}$ and $g\to\sqrt{2\Delta}$. Since $\psi'(k^\ast(n))=0$, a second-order Taylor expansion of $\psi(k_0)$ around $k^\ast(n)$ implies that $2\Delta=-\psi''(\xi)(k^\ast(n)-k_0)^2=|f''(\xi)|(k^\ast(n)-k_0)^2$ for some $\xi\in(k_0,k^\ast)$ given that $\psi''(k)=f''(k)<0$. It follows that
$R(0^+)=\sqrt{|f''(\xi)|/|f''(k^\ast(n))|}>1$ because $|f''(k)|=\alpha(1-\alpha)k^{\alpha-2}$ is strictly decreasing and $\xi<k^\ast(n)$. 

As $\gamma\to\infty$, $q\to1$ and $g\sim\gamma^{-1/2}\to0$, so $R(\infty)=0$. Strict monotonicity and the intermediate value theorem give a unique $\underline\gamma$ with $R(\underline\gamma)=1$.

\end{document}